\DeclareMathAlphabet\mathbfcal{OMS}{cmsy}{b}{n}
\newcommand{\barz}{{\bar z}}
\newcommand{\calA}{\ensuremath{\mathcal A}\xspace}
\newcommand{\tildet}{{\tilde{t}}} 
\newcommand{\tildeO}{{\tilde{O}}}
\colorlet{darkgreen}{green!45!black}
\newtheorem{theorem}{Theorem}
\newtheorem{lemma}{Lemma}[section]
\newtheorem{claim}[lemma]{Claim}
\newcommand{\ignore}[1]{}
\newcommand{\margincomment}[2]%
{\marginpar{\footnotesize\raggedright {\color{red}#1}: #2}}
\newcommand{\etal}{et~al.}
\newcommand{\myparagraph}[1]{{\medskip\noindent\textbf{#1}}}
\newcommand{\emparagraph}[1]{{\medskip\noindent\textit{#1}}}
\newcommand{\mycase}[1]{{\underline{Case~#1}:}}
\newcommand{\braced}[1]{{ \left\{ {#1} \right\} }}
\newcommand{\barred}[1]{{ \left| {#1} \right| }}
\newcommand{\assign}{\,{\leftarrow}\,} 
\newcommand{\integers}{{\mathbb Z}}
\newcommand{\Exp}{{\mbox{\rm Exp}}}
\newcommand{\weight}{\ell}
\newcommand{\maxedgeweight}{{\weight}_{\textrm{max}}}
\newcommand{\algA}{\mathcal{A}}
\newcommand{\algB}{\mathcal{B}}
\newcommand{\algR}{\mathcal{R}}
\newcommand{\algorithms}{{\mathbb{A}}}
\newcommand{\randalgR}{\mathcal{R}}
\newcommand{\weightset}{{\mathbb{L}}}
\newcommand{\runtime}{{T}}
\newcommand{\phaselength}{{\partial{t}}}
\newcommand{\modelQ}{{\mathbb{Q}}}
\newcommand{\half}{\textstyle{\frac{1}{2}}}
\newcommand{\onethird}{\textstyle{\frac{1}{3}}}
\newcommand{\onesixth}{\textstyle{\frac{1}{6}}}
\newcommand{\onetwelth}{\textstyle{\frac{1}{12}}}
\newcommand{\ListLengths}{\setlength{\itemsep}{0ex}\setlength{\topsep}{1ex}\setlength{\partopsep}{0ex}}
\title{Lower Bounds for Adaptive Relaxation-Based Algorithms for Single-Source Shortest Paths\thanks{Research supported by NSF grant CCF-2153723.}}
\author[1]{Sunny Atalig}
\author[1]{Alexander Hickerson}
\author[1]{Arrdya Srivastav}
\author[2]{Tingting Zheng}
\author[1]{Marek Chrobak}
\affil[1]{University of California at Riverside, USA}
\affil[2]{Guangdong University of Technology, China}
\begin{document}

\maketitle

\begin{abstract}
We consider the classical single-source shortest path problem in directed weighted graphs.
D.~Eppstein proved recently an $\Omega(n^3)$ lower bound for
oblivious algorithms that use relaxation operations to update the
tentative distances from the source vertex.  We generalize this result
by extending this $\Omega(n^3)$ lower bound to \emph{adaptive} algorithms that, in addition to relaxations, can perform 
queries involving some simple types of linear inequalities between edge
weights and tentative distances. Our model captures as a special case the 
operations on tentative distances used by Dijkstra's algorithm.
\end{abstract}


\section{Introduction}
\label{sec: introduction}


We consider the classical single-source shortest path problem in directed weighted graphs.
In the case when all edge weights are non-negative, Dijkstra's algorithm~\cite{1959_dijkstra_algorithm}, if implemented
using Fibonacci heaps, computes the shortest paths in time $O(m + n\log n)$, where $n$ is the
number of vertices and $m$ is the number of edges. In the general case,
when negative weights are allowed (but not negative cycles),
the Bellman-Ford algorithm~\cite{1955_shimbel_structure_in_communication_nets,1958_bellman_routing_problem,1959_moore_shortest_path,1956_ford_network_flow_theory}
solves this problem in time $O(nm)$. 

Both algorithms work by repeatedly executing operations of \emph{relaxations}. 
(This type of algorithms are also sometimes called label-setting algorithms~\cite{1984_deo_pang_shortest_path}.)
Let $\weight_{uv}$ denote the weight of an edge $(u,v)$.
For each vertex $v$, these algorithms maintain a value $D[v]$
(that we will refer to as the \emph{D-value} at $v$)
that represents the current upper bound on the distance from the source vertex $s$ to $v$. 
A relaxation operation for an edge $(u,v)$
replaces $D[v]$ by $\min \braced{D[v] , D[u]+\ell_{uv}}$. That is, $D[v]$ is replaced
by $D[u] + \weight_{uv}$ if visiting $v$ via $u$ turns out to give a shorter distance
to $v$, based on the current distance estimates.
When the algorithm completes, each value $D[v]$ is equal to the correct distance from $s$ to $v$.
Dijkstra's algorithm executes only one relaxation for each edge, while
in the Bellman-Ford algorithm each edge can be relaxed $\Theta(n)$ times.

We focus on the case of complete directed graphs, in which case $m = n(n-1)$.
For complete graphs, the number of relaxations in Dijkstra's algorithm is $\Theta(n^2)$.
In contrast, the Bellman-Ford algorithm executes $\Theta(n^3)$ relaxations.
This raises the following natural question: \emph{is it possible to
solve the shortest-path problem by using asymptotically fewer than $O(n^3)$ relaxations,
even if negative weights are allowed?}

To make this question meaningful, some restrictions need to be imposed on allowed algorithms.
Otherwise, an algorithm can ``cheat'': it can compute the shortest paths without any explicit use of
relaxations, and then execute $n-1$ relaxations on the edges in  the shortest-path tree,
in order of their hop-distance from $s$, thus making only $n-1$ relaxations.

Eppstein~\cite{2023_Eppstein_non-adaptive_shortest_path} circumvented this issue
by assuming a model where the sequence of relaxations is independent of the weight assignment. 
Then the question is whether there is a short ``universal'' sequence of relaxations,
namely one that works for an arbitrary weight assignment. 
The Bellman-Ford algorithm is essentially such a universal sequence of length $O(n^3)$.
Eppstein~\cite{2023_Eppstein_non-adaptive_shortest_path} proved that this is asymptotically
best possible; that is, each universal relaxation sequence must have $\Omega(n^3)$ relaxations.
This lower bound applies even in the randomized case, when the
relaxation sequence is generated randomly and the objective is to minimize the
expected number of relaxations.

The question left open in~\cite{2023_Eppstein_non-adaptive_shortest_path} is  
whether the $\Omega(n^3)$ lower bound applies to relaxation-based \emph{adaptive} algorithms,
that generate relaxations based on information collected during the computation.
(This problem is also mentioned by Hu and Kozma~\cite{2024_Hu_Kozma_Yens_improvement_is_optimal}, who
remark that lower bounds for adaptive algorithms have been ``elusive''.)
We answer this question in the affirmative for some natural types of adaptive algorithms.

In our computation model, an algorithm is allowed to perform two types of operations:
(i) \emph{queries}, which are simple linear inequalities involving edge weights and D-values,
and (ii) \emph{relaxation updates}, that modify D-values.
The action at each step depends on the outcomes of the earlier executed queries.
Such algorithms can be represented as decision trees, with
queries and updates in their nodes, and with each query node having two
children, one corresponding to the ``yes'' outcome and the other to the ``no'' outcome.

Specifically, we study \emph{query/relaxation-based algorithms} that can make queries of three types:
\begin{description}\setlength{\itemsep}{-0.02in}
\item{\emph{D-comparison query:}} ``$D[u] < D[v]$?'', for two vertices $u$, $v$,
\item{\emph{Weight-comparison query:}} ``$\weight_{uv} < \weight_{xy}$?'', for two edges $(u,v)$, $(x,y)$,
\item{\emph{Edge query:}} ``$D[u] + \weight_{uv} < D[v]$?'', for an edge $(u,v)$,
\end{description}
and can update D-values as follows:
\begin{description}\setlength{\itemsep}{-0.02in}
\item{\emph{Relaxation update:}} ``$D[v] \assign \min\braced{D[v], D[u] + \weight_{uv}}$'', for an edge $(u,v)$.
\end{description}

Throughout the paper, for brevity, we will write ``D-query'' instead of ``D-comparison query''
and ``weight query'' instead of ``weight-comparison query''.

We assume that initially $D[s] = 0$ and $D[v] = \ell_{sv}$ for all vertices $v \ne s$. 
This initialization and the form of relaxation updates ensure that at all times each value $D[v]$
represents the length of some simple path from $s$ to $v$.
Thus D-queries and edge queries amount to comparing the lengths of two paths from $s$.
Further, the D-values induce a tentative approximation of the shortest-path tree,
where a node $u$ is the parent of a node $v$ if the last decrease of $D[v]$ 
resulted from a relaxation of edge
$(u,v)$. So the algorithm's decision at each step depends on this tentative shortest-path tree.


\myparagraph{Our contributions.}
We start by considering algorithms that use only edge queries. For such algorithms
we prove the following $\Omega(n^3)$ lower bound:


\begin{theorem}\label{thm: relaxation-outcome lower bound} 
(a) Let $\algA$ be a deterministic query/relaxation-based algorithm for the single-source shortest path problem that uses only
edge queries. Then the running time of $\algA$ is $\Omega(n^3)$, even if
the weights are non-negative and symmetric (that is, the graph is undirected).
(b) If $\algA$ is a randomized algorithm then the same $\Omega(n^3)$ lower bound holds for $\algA$'s expected running time.
\end{theorem}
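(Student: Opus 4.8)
Here is how I would approach the proof.

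The plan is a Yao-style reduction followed by an adversary argument. For part~(b) it suffices, by Yao's minimax principle, to exhibit a distribution $\calD$ over undirected, non-negatively weighted complete graphs such that \emph{every deterministic} query/relaxation algorithm that uses only edge queries performs $\Omega(n^3)$ operations in expectation on inputs drawn from $\calD$; part~(a) is then the special case in which $\calD$ is a point mass (or one simply runs the same adversary deterministically). So the whole task reduces to designing $\calD$ and analysing a fixed deterministic $\algA$ against it, with the weights revealed lazily in response to $\algA$'s queries and relaxation updates.

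I would build $\calD$ --- in the spirit of Eppstein's oblivious construction, but tailored to the edge-query model --- so that the shortest-path structure is controlled by hidden random data (for concreteness, a random ordering, or family of orderings, of the vertices together with random ``thresholds''), chosen to guarantee three properties. (1) \emph{Many forced progress events.} To make all $D$-values correct, $\algA$ must cause $\Omega(n^2)$ distinct ``progress events'', each event being a specific still-needed decrease of some $D[v]$ toward its true value; these events are linearly ordered, since the event at $v$ cannot occur until the vertex preceding $v$ on its shortest path has had all of its events. This is exactly the Bellman--Ford phenomenon that, in the absence of min-finding (D-comparison) queries, a $D$-value may have to be lowered $\Omega(n)$ times, across $\Omega(n)$ vertices and $\Omega(n)$ ``rounds''. (2) \emph{Information hiding.} When a given progress event is ``due'', the unique edge $(u,v)$ whose relaxation realises it is, conditioned on $\algA$'s entire history, uniformly distributed over a pool of $\Omega(n)$ candidate edges $\algA$ cannot tell apart: every edge query $(w,v)$ with $w$ outside the right choice can be, and is, answered ``no'' (the relaxation would not help), and relaxing such an edge changes nothing. (3) \emph{No shortcuts from symmetry.} Since the graph is undirected, $\algA$ may also relax ``backward'' and chord edges; the weights (all non-negative) are set so that such relaxations are provably never progress events, hence cannot collapse the ordered chain.

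Granting such a $\calD$, fix a deterministic $\algA$ and let the adversary maintain a large family of weight assignments consistent with all answers given so far, using property~(2) to answer each edge query, and to resolve the effect of each speculative relaxation, so as to stay inside this family. Define a potential equal to the number of progress events realised so far; by~(1) $\algA$ must drive it to $\Omega(n^2)$ before it can halt with correct output on some realisable weight assignment. The core lemma is that between two consecutive increments of the potential, $\algA$ performs $\Omega(n)$ operations in expectation: by~(2) the only way to trigger the next event is to hit the distinguished edge among $\Omega(n)$ indistinguishable candidates, and each edge query or speculative relaxation $\algA$ issues is the distinguished one with probability $O(1/n)$ conditioned on the past, so $\Omega(n)$ steps elapse in expectation before the potential rises. (One must also observe that a ``yes''-answered edge query does hand $\algA$ a useful relaxation for free, but such an answer is charged to the event it realises, so there is at most one per increment.) Summing over the $\Omega(n^2)$ required increments yields an expected $\Omega(n^3)$ operations; since this holds for every deterministic $\algA$, Yao's principle gives the randomized bound, and the deterministic bound~(a) follows \emph{a fortiori}.

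I expect the main obstacle to be constructing $\calD$ so that properties~(1)--(3) hold \emph{simultaneously} under the constraints of the theorem: the weights must be non-negative and symmetric, the $\Omega(n^2)$ progress events must be genuinely non-amortizable (no single relaxation may realise many of them at once, and no clever batch of edge queries may reveal several at once), and the ``next'' progress edge must stay information-theoretically hidden against \emph{adaptive} probing --- meaning the adversary's family of consistent weight functions must retain, at every point of a length-$\Omega(n^3)$ computation, enough unpinned weights to keep answering ``no''. Arranging enough independent ``directions of ignorance'' inside a non-negative undirected metric, while certifying that backward and chord relaxations are useless, is the technical heart; once the weight family and the consistent answering strategy are in place, the Yao reduction and the per-increment counting are routine.
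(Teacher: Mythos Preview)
Your Yao-plus-adversary scaffolding is right, and you correctly isolate the crux: engineer a weight family on which edge queries reveal almost nothing. But your specific accounting---$\Omega(n^2)$ linearly ordered ``progress events'', each hiding the triggering edge among $\Omega(n)$ candidates---has an internal tension that the paper's argument avoids. If, as you write, ``the event at $v$ cannot occur until the vertex preceding $v$ on its shortest path has had \emph{all} of its events'', then when $D[v]$ is first allowed to move, its predecessor's $D$-value is already final, and a \emph{single} relaxation of the predecessor edge finalises $D[v]$; you then have only $\Theta(n)$ events, not $\Theta(n^2)$. Conversely, if each vertex really undergoes $\Omega(n)$ decrements, those decrements form a grid-like partial order (the $i$th decrement of $x_j$ needs only the $i$th decrement of $x_{j-1}$), and the edge that triggers all of them is the \emph{same} edge $(x_{j-1},x_j)$---once the algorithm has located it, the remaining decrements at $x_j$ cost $O(1)$ apiece, so the $\Omega(n)$-per-event charge collapses.

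The paper's decomposition is the transpose of yours: $\Theta(n)$ phases, each costing $\Theta(n^2)$. The weight family is indexed by permutations $\pi = x_0,\ldots,x_{n-1}$ (with $x_0=s$); the shortest-path tree is the Hamiltonian path $\pi$. In phase $k$ the adversary has already committed to the prefix $x_0,\ldots,x_{2k-2}$ and must next reveal the pair $(x_{2k-1},x_{2k})$, which is one directed edge among the $\Theta((n-2k)^2)$ edges \emph{inside} the undiscovered set $Y$---not among $\Theta(n)$ edges into a fixed target. The weights are tuned so that every edge query on an edge within $Y$ other than the special one is truthfully answered ``no'' and every speculative relaxation there is a no-op; the adversary simply withholds the special edge until every other edge inside $Y$ has been accessed (and, to make the ``no'' answers consistent, also forces all edges from $x_{2k-2}$ into $Y$ to be relaxed first). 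Summing $\sum_k (n-2k+1)^2$ gives $\Omega(n^3)$. For part~(b) the same family under the uniform distribution on permutations makes the hidden edge uniform over its $\Theta((n-2k)^2)$ candidates, so each phase has expected length $\Omega((n-2k)^2)$. In short, the unit of hiding is a \emph{quadratically} large pool of candidate edges probed $\Theta(n)$ times, not a linearly large pool probed $\Theta(n^2)$ times; your property~(2) should quantify over all edges inside the undiscovered set, not just over edges into a single vertex.
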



We first give the proof of Theorem~\ref{thm: relaxation-outcome lower bound}(a), the lower
bound for deterministic algorithms. In this proof, (in Section~\ref{sec: deterministic algorithms relaxation queries}), 
we view the computation of $\algA$ as a game against an adversary who gradually constructs a weight assignment,
consistent with the queries,
on which most of the edge queries performed by $\algA$ will have negative outcomes, thus revealing
little information to $\algA$ about the structure of the shortest-path tree.

Then, in Section~\ref{sec: deterministic algorithms three types of queries}, we
show how to extend this lower bound to all three types of queries if negative weights are allowed, 
proving Theorem~\ref{thm: all queries lower bound}(a) below.


\begin{theorem}\label{thm: all queries lower bound} 
(a) Let $\algA$ be a deterministic query/relaxation algorithm for the single-source shortest path problem that uses the
three types of queries: $D$-queries, weight-queries, and edge queries, as well as relaxation updates.
Then the running time of $\algA$ is $\Omega(n^3)$.
(b) If $\algA$ is a randomized algorithm then the same $\Omega(n^3)$ lower bound holds for $\algA$'s expected running time.
\end{theorem}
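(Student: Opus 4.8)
The plan is to reduce the general case (all three query types) to the restricted case already handled by Theorem~\ref{thm: relaxation-outcome lower bound}, by showing that on a suitable family of hard instances the $D$-queries and weight-queries reveal no information an adaptive algorithm could not already predict, so that only edge queries carry information. Concretely, I would design the adversary of Section~\ref{sec: deterministic algorithms relaxation queries} to maintain, in addition to its partial weight assignment, an invariant guaranteeing that all the $D$-values and edge weights that the algorithm has ``seen'' are totally ordered in a way that is forced and known in advance. The cleanest way to arrange this is to embed the hard instances in a regime where each newly created $D$-value and each relevant edge weight lies in a distinct, widely separated scale (e.g.\ assign edge weights from geometrically spaced intervals indexed by the hop-structure of the instance, so that $D[u]+\ell_{uv}$ versus $D[v]$, and $\ell_{uv}$ versus $\ell_{xy}$, are determined by the indices alone). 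Since $D$-values are always lengths of simple paths from $s$ (as noted after the model definition), their pairwise order is then a fixed function of which relaxations have occurred — information the adversary already controls — so every $D$-query and weight-query has a predetermined answer that the adversary can hand back at no cost.

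The key steps, in order, would be: (1) Recall the adversary strategy from the proof of Theorem~\ref{thm: relaxation-outcome lower bound}(a), which answers edge queries ``no'' as long as possible and forces $\Omega(n^3)$ relaxations on a complete graph. (2) Re-instantiate that construction over a scaled weight family as above, where negative weights are now permitted (this is where the hypothesis ``if negative weights are allowed'' is used — the scaling may push some weights negative, but no negative cycles appear because the underlying combinatorial instance is the same). (3) Prove the separation invariant: at every step of the computation, for any two path-lengths $D[u],D[v]$ realizable under the current partial assignment, their order is constant across all completions consistent with the adversary's commitments, and similarly for $\ell_{uv}$ vs.\ $\ell_{xy}$; hence $D$-queries and weight-queries are answered deterministically and consistently without constraining the adversary further. (4) Conclude that the decision tree of $\algA$, restricted to the branch the adversary follows, makes only edge queries that carry information, so the lower-bound argument of Theorem~\ref{thm: relaxation-outcome lower bound}(a) applies verbatim to give $\Omega(n^3)$ running time. (5) For part~(b), combine this with Yao's minimax principle exactly as in the randomized half of Theorem~\ref{thm: relaxation-outcome lower bound}: put the uniform (or appropriate) distribution on the hard family, observe the adversary's answers are the ``correct'' answers for each instance in the support, and lower-bound the expected number of edge queries/relaxations on a random instance.

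The main obstacle I anticipate is step~(3): making the separation invariant robust against the full adaptivity of the algorithm. The danger is that a relaxation update could create a $D$-value equal to a nontrivial combination of several scaled weights whose sign or order is \emph{not} determined by indices alone — e.g.\ a path using edges from many different scales where the top scale does not dominate, or a tie between two paths of the same hop-structure. Handling this requires choosing the scaling gaps large enough (super-polynomially, or via symbolic infinitesimals/a lexicographic weight ordering) that the highest-index edge on any realizable path strictly dominates the contribution of all lower-index edges combined, and simultaneously ensuring the adversary's ``no'' answers to edge queries remain feasible in this scaled regime — i.e.\ that committing to $\ell_{uv}\ge D[v]-D[u]$ for many queried edges does not over-constrain the geometric scales. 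I would resolve this by interleaving the scale assignment with the adversary's edge-weight commitments: whenever the adversary must fix a weight to answer an edge query ``no'', it picks a value in the pre-assigned scale interval for that edge, which is wide enough to accommodate the constraint precisely because the intervals are geometrically separated. Verifying that this interleaving never runs out of room — and that the resulting assignment has no negative cycle — is the technical heart of the argument; everything else is a transcription of the already-established edge-query lower bound.
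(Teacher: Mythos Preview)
Your high-level strategy---render $D$-queries and weight-queries informationless so that only edge queries matter, then invoke Theorem~\ref{thm: relaxation-outcome lower bound}---is exactly the paper's. However, the technical device you propose (geometric scaling of edge weights indexed by ``hop-structure'') is not the one the paper uses, and the obstacle you correctly flag in step~(3) is precisely where your mechanism struggles while the paper's sidesteps it entirely.

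The paper's key idea is to use a \emph{potential function} $\phi:V\to\integers$ with the Golomb-ruler property (all pairwise differences $\phi_v-\phi_u$ distinct). Given any weight assignment $\weight$, one considers $\weight' = \weight + c\,\Delta\phi$, where $\Delta\phi(u,v)=\phi_v-\phi_u$ and $c = 2\maxedgeweight n + 1$ is large enough to dominate. The crucial observation is the \emph{path-independence} of potentials: every path from $s$ to $v$ under $\weight'$ has length $\weight(\text{path}) + c\phi_v$, so $D'[v] = D[v] + c\phi_v$ at all times, \emph{regardless of which relaxations have occurred}. Hence the ordering of $D'$-values is simply the fixed ordering of the $\phi$-values, the ordering of $\weight'$-weights is the fixed ordering of the $\Delta\phi$-values (the Golomb-ruler condition kills ties), and edge queries are \emph{unchanged} by adding $c\Delta\phi$ because both sides of ``$D[u]+\weight_{uv}<D[v]$?'' are lengths of paths from $s$ to the same vertex $v$, so the shift $c\phi_v$ cancels. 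This is then packaged as a black-box reduction (Theorem~\ref{thm: reduction theorem}): any algorithm $\algA$ with all three query types is converted into an algorithm $\algB$ using only edge queries, by hard-wiring the predetermined answers to $D$- and weight-queries; the lower bound follows from Theorem~\ref{thm: relaxation-outcome lower bound}(a), and part~(b) is immediate by applying the same reduction to each deterministic algorithm in the support of a randomized one.

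Your edge-indexed scaling lacks this path-independence. A $D$-value becomes a sum of several edge-scales, and as you note, its dominant scale depends on which path currently realizes $D[v]$---hence on the relaxation history---so $D$-comparisons are not predetermined but merely ``determined by information the adversary already controls.'' That is a strictly weaker invariant, and making it airtight (ties in the top scale, two paths sharing their highest-index edge, feasibility of the adversary's ``no'' commitments across interleaved scales) is exactly the morass you anticipate. The potential trick dissolves all of this: scale by \emph{vertices}, not edges, and the dominant term in $D[v]$ is always $c\phi_v$, independent of the path. Once you see that, there is nothing left to interleave.
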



Our query/relaxation model captures as a special case the operations on
tentative distances used by Dijkstra's algorithm, because D-queries are sufficient to maintain 
the ordering of vertices according to their D-values. More broadly,
Theorem~\ref{thm: all queries lower bound} may be helpful in guiding future research on 
speeding up shortest-path algorithms for the general case, when negative weights are allowed,
by showing limitations of na\"{i}ve approaches based on extending Dijkstra's algorithm.

\smallskip

The proof of Theorem~\ref{thm: all queries lower bound}(a) is essentially via a reduction, 
showing that the model with all three types of queries can
be reduced to the one with only edge queries, and then applying the lower bound from Theorem~\ref{thm: relaxation-outcome lower bound}(a).
This reduction modifies the weight assignment, making it asymmetric and introducing negative weights.
As a side result, we also observe in Theorem~\ref{thm: arbitrary graph} that
this reduction works even for arbitrary (not necessarily complete) graphs,
giving a lower bound that generalizes the one in~\cite{2023_Eppstein_non-adaptive_shortest_path},
as it applies to adaptive algorithms in our query/relaxation model.

Finally, in Section~\ref{sec: lower bounds for randomized algorithms}
we extend both lower bounds to randomized algorithms.  The proofs are based on Yao's
principle~\cite{1977_yao_probabilistic_computations}; that is, we give a probability
distribution on weight assignments on which any deterministic algorithm performs poorly.

Our lower bound results are valid even if all weights are integers of polynomial size.
In the proof of Theorem~\ref{thm: relaxation-outcome lower bound} all weights
are non-negative integers with maximum value $\maxedgeweight = O(n)$.
The proof of Theorem~\ref{thm: all queries lower bound} uses
Golomb rulers~\cite{1938_singer_projective_geometry,1941_erdos_sidon,2005_bollobas_differences}
(also known as Sidon sets) to construct weight assignments with maximum value $\maxedgeweight= O(n^4)$.
In the randomized case, these bounds increase by a factor of $O(n)$.

As explained near the end of Section~\ref{sec: lower bounds for randomized algorithms},
the lower bounds for expectation in Theorems~\ref{thm: relaxation-outcome lower bound}(b) and~\ref{thm: all queries lower bound}(b)
can be quite easily extended to high-probability bounds.


\myparagraph{Related work.}
As earlier mentioned, the Bellman-Ford algorithm can be thought of as a universal
relaxation sequence. It consists of $n-1$ iterations with each iteration relaxing all edges
in some pre-determined order, so the length of this sequence is $(1+o(1))n^3$.
The leading constant $1$ in this bound was reduced to $\half$ by Yen~\cite{1975_yen_shortest_path_network_problems}, 
who designed a universal sequence with $(\half + o(1))n^3$ relaxations.
Eppstein's lower bound in~\cite{2023_Eppstein_non-adaptive_shortest_path} shows in fact
a lower bound of $\onesixth$ on the leading constant,
and just recently Hu and Kozma~\cite{2024_Hu_Kozma_Yens_improvement_is_optimal}
proved that constant $\half$ is in fact optimal.

Bannister and Eppstein~\cite{2012_bannister_eppstein_randomized_speedup} showed that the
leading constant can be reduced to $\onethird$ with randomization, namely that there is
a probability distribution on relaxation sequences for which a sequence, drawn from
this distribution, will compute correct distances in expected time $(\onethird + o(1))n^3$
(or even with high probability). 
Eppstein's lower bound proof~\cite{2023_Eppstein_non-adaptive_shortest_path} for randomized
sequences shows that this constant is at least $\onetwelth$. 

Some of the above-mentioned papers extend the results to graphs that are not necessarily complete.
In particular, Eppstein~\cite{2023_Eppstein_non-adaptive_shortest_path} proved that for
$n$-vertex graphs with $m$ edges, $\Omega(mn/ \log n)$ relaxations are necessary.

The average-case complexity of the Bellman-Ford and Dijkstra's algorithms
has also been studied. For example, 
Meyer~\etal~\cite{2011_meyer_etal_new_bounds_for_old_algorithms}
show that the Bellman-Ford algorithm requires $\Omega(n^2)$ steps on average,
if the weights are uniformly distributed random numbers from interval $[0,1]$.

Some work has been done on improving lower and upper bounds in models beyond our query/relaxation setting. 
Of those, the recent breakthrough paper by Fineman~\cite{2024_Fineman_single-source_shortest_paths}
is particularly relevant. It
gives a randomized $\tildeO(mn^{8/9})$-expected-time algorithm for computing single-source
shortest paths with arbitrary weights. Fineman's computation model is not far from ours in the
sense that the weights are arbitrary real numbers and
 the only arithmetic operations on weights are additions and subtractions, but it
also needs branch instructions that cannot be expressed using our queries. 

The special case when weights are integers is natural and has been extensively 
investigated (see~\cite{2017_cohen_etal_negative-weight_shortest_paths,1995_goldberg_scaling_algorithms,2022_bernstein_etal_negative_weights}, for example).
In the integer domain one can extract information about the weight distribution, and thus
about the structure of the shortest-path tree, using
operations other than linear inequalities involving weights.  The state-of-the-art in this model is the 
(randomized) algorithm by Bernstein~\etal~\cite{2022_bernstein_etal_negative_weights}
that achieves running time $O(m\log^8 n \log W)$ with high probability for weight assignments
where the smallest weight is at least $-W$ (and $W\ge 2$).

Some lower bounds have also been reported for related problems, for
example for shortest paths with restrictions on the number of hops~\cite{2004_cheng_ansari_all_hops,2002_guerin_orda_computing_shortest,2023_kociumaka_bellman-ford_optimal}
or $k$-walks~\cite{2016_jukna_schnitger_on_the_optimality}.


\section{Preliminaries}
\label{sec: preliminaries}

The input is a weighted complete directed graph $G$.
The set of all vertices of $G$ is denoted by $V$, and $s\in V$ is designated as the \emph{start vertex}.
The set of all edges of $G$ is denoted by $E$ and a \emph{weight assignment} is a function $\weight \colon E \to \integers$.
(While real-valued weights are common in the literature, in our constructions we only need integers.)
We will use notations $\weight(u,v)$ and $\weight_{uv}$ for the weight of an edge $(u,v)$. 
By $\maxedgeweight$ we denote the maximum absolute value of an edge weight, that is
$\maxedgeweight = \max_{(u,v)\in E}\barred{\ell_{uv}}$.

Whenever we write ``path'' we mean a ''simple path'', that is a path where each vertex is visited at most once.
The \emph{distance from $x$ to $y$} is defined as the length of the shortest path from $x$ to $y$.
We will assume that the input graph does not have negative cycles. Note that this assumption
gives an algorithm additional information that can potentially be used to reduce the running time. 

The edges in the shortest paths from $s$
to all other vertices form a tree that is called the shortest-path tree. The root of this tree is $s$.
(There is a minor subtlety here related to ties. A more precise statement is that
\emph{there is a way to break ties}, so that the shortest paths form a tree.)


\myparagraph{Formalizing query/relaxation models.} 
We now formally define our computation model. We assume that each vertex $v$ has an associated value $D[v]$, called the D-value at $v$.
Initially $D[s] = 0$ and $D[v] = \ell_{sv}$ for $v\neq s$.
A \emph{query} is a boolean function whose arguments are edge weights and D-values. 
A \emph{query model} $\modelQ$ is simply a set of allowed queries.
For example, the model that has only edge queries is $\modelQ= \braced{1_{D[v] < D[u] + \ell_{uv}} \mid (u,v) \in E}$,
where $1_\xi$ is the indicator function for a predicate $\xi$.
The query/relaxation model in Theorem~\ref{thm: relaxation-outcome lower bound} has query model
$\modelQ$ consisting of all D-queries, weight-queries, and edge queries.
(The reduction in Section~\ref{sec: deterministic algorithms three types of queries} is actually for
an even more general query model.)

An algorithm $\calA$ using a query/relaxation model $\modelQ$ is then a decision tree,
where each internal node corresponds to either a query from $\modelQ$ (with one ``yes'' and one ``no'' branch)
or a relaxation operation (which has one branch), and each leaf is a relaxation operation.
(These leaves have no special meaning.) 
With this definition, at any step of the computation, $D[v]$ represents
the length of a path from $s$ to $v$. 
This decision tree must correctly compute all distances from $s$;
that is, for each weight assignment $\weight$, when the computation of $\calA$ reaches a leaf then
for each vertex $v$ the value of $D[v]$ must be equal to the distance from $s$ to $v$.

The running time of $\algA$ for a weight assignment $\weight$ is defined as the number of steps performed
by $\algA$  until each value $D[v]$ is equal to the correct distance from $s$ to $v$. 
Notice that this is not the same as the depth of the decision tree, 
which could be greater. 
(This definition matches the concept of ``reduced cost'' used in~\cite{2023_Eppstein_non-adaptive_shortest_path} 
for non-adaptive algorithms. For deterministic algorithms we could as well define the running time
as the maximum tree depth, but this definition wouldn't work in the randomized case.)


\myparagraph{Edge weights using potential functions.} 
We define a \emph{potential function} as a function $\phi : V\to \integers$ with $\phi(s) = 0$.
(These functions are also sometimes called \emph{price functions} in the literature.)
To reduce clutter, we will sometimes write the potential value on $v$ as $\phi_v$ instead of $\phi(v).$

A potential function induces a weight assignment $\Delta\phi$ defined by $\Delta\phi(u, v) = \phi_v - \phi_u$, for each $(u,v)\in E$.
Such potential-induced weights satisfy the following
\emph{path independence property:}	For any two vertices $u$, $v$,
all paths from $u$ to $v$ have the same length, namely $\Delta\phi(u, v)$.
Note that $\Delta\phi$ will have some negative weights, unless $\phi$ is identically $0$,
but it does not form negative cycles. Also, every spanning tree rooted at $s$ is a shortest-path tree for $\Delta\phi$.

Any weight assignment $\ell$ can be combined with a potential function $\phi$ to
obtain a new weight assignment $\ell' = \ell + \Delta\phi$.
Such $\ell'$ satisfies the following \emph{distance preservation property:} 
For any two vertices $u$, $v$ and any path $P$ from $u$ to $v$, we have ${\ell'}(P) = \ell(P) + \phi_v - \phi_u$.

Due to the above properties, potential functions have played a key role in the most recent single-source shortest path algorithms  \cite{2022_bernstein_etal_negative_weights,2024_Fineman_single-source_shortest_paths},
in particular being used to transform a negative weight assignment into a non-negative one so that Dijkstra's algorithm can be applied.
However, in this paper we will use them for an entirely different purpose, which is to
construct difficult weight assignments in Section~\ref{sec: deterministic algorithms three types of queries}.
Roughly, a potential-induced assignment 
$\Delta\phi$ can act as a ``mask'' on top of existing weights that renders D-queries and weight queries useless.


\section{Lower Bound for Deterministic Algorithms with Edge Queries}
\label{sec: deterministic algorithms relaxation queries}


This section gives the proof of Theorem~\ref{thm: relaxation-outcome lower bound}(a).
That is, we prove that every deterministic
algorithm that uses relaxations and edge queries needs to make  $\Omega(n^3)$ operations to compute
correct distances. This
lower bound applies even if all weights are non-negative and the weight assignment is symmetric.
(One can think of it as an undirected graph, although we emphasize that in the proof below
we use directed edges.)

For the proof, fix an algorithm $\algA$. We will show how to construct a weight assignment such that only
after $\Omega(n^3)$ operations the D-values computed by $\algA$ represent the correct distances from the source vertex.

Each weight assignment considered in our construction is symmetric and is uniquely specified by a permutation of the vertices.
The weight assignment corresponding to a permutation $\pi = x_0,x_1,...,x_{n-1}$, where $x_0 = s$,
is defined as follows:  for any $0 \le i < j < n$,
\begin{equation*}
\weight_\pi(x_i,x_j) \;=\; 
			\begin{cases}
						2	 	& \;\textrm{if}\; j = i+1
						\\
						L-5i/2 &\;\textrm{if}\; j \ge i+2 \;\textrm{and}\; i \;\textrm{is even}
						\\
						L	& \;\textrm{if}\; j \ge i+2 \;\textrm{and}\; i \;\textrm{is odd}
			\end{cases}
\end{equation*}
where $L$ is some sufficiently large integer, say $L = 5n$.
Then the shortest path tree is just a Hamiltonian path $x_0,x_1,...,x_{n-1}$.
Note that the distance between any two vertices is less than $2n$, while each
edge not on this path has length larger than $2n$.

The proof is by showing an adversary strategy that gradually constructs a permutation of the vertices
in response to $\algA$'s operations. 
The strategy consists of $(n-1)/2$ phases. (For simplicity, assume that $n$ is odd.)
When a phase $k$ starts, for $k = 1,...,(n-1)/2$, the adversary will have already revealed a prefix $X_{k-1} = x_0,x_1,...,x_{2k-2}$
of the final permutation. The goal of this phase is to extend $X_{k-1}$ by two more vertices,
responding to $\algA$'s queries and updates so as to force $\algA$ to make as many operations as possible within the phase,
without revealing anything about the rest of the permutation.

To streamline the proof,  we think about the initial state as following the non-existent $0'$th phase,
and we assume that the D-values for all vertices other than $s$ are initialized to $L+1$, instead of $L$.

We now describe the adversary strategy in phase $k$, by specifying how the adversary responds to each operation of $\algA$
executed in this phase.  Let $Y_{k-1} = V\setminus X_{k-1}$,
let $A$ be set of the edges from $x_{2k-2}$ to $Y_{k-1}$ and $B$ be the set of edges inside $Y_{k-1}$.
The adversary will maintain marks on the edges in $A\cup B$, starting with all edges unmarked. 
We will say that $\algA$ \emph{accesses} an edge $(u,v)$ if it executes either an
edge query or a relaxation for $(u,v)$. 

The idea is this: because of the choice of edge weights and the invariants on the D-values (to be presented soon),
each edge query for an edge $(x_{2k-2},y)\in A$ not yet relaxed in this phase will have a positive outcome.
This way, these responses will not reveal what the next vertex $x_{2k-1}$ on the path is. 
The adversary waits until
$\algA$ relaxes all these edges, and keeps track of these relaxations by marking the relaxed edges. 
At the same time, $\algA$ may be accessing edges in $B$.
The adversary waits until the last access of $\algA$ to an edge $(u,v)\in B$ for which edge $(x_{2k-2},u)$
is already marked. Until this point, all queries to edges in $B$ have negative outcomes.
Only this last edge will have a positive outcome to an edge query, if it's made by $\algA$, 
and the adversary will further make sure that this edge gets relaxed, before ending the phase.

To formalize this, let $(u,v)$ be the edge accessed by $\algA$ in the current operation. We
describe the adversary's response by distinguishing several cases:
\begin{description}\setlength{\itemsep}{-0.03in}
		\item{(s1)} $(u,v) = (x_{2k-2},v)\in A$.
			If this is a relaxation, mark $(u,v)$.
			If this is an edge query do this: if $(u,v)$ is unmarked, respond ``yes'', else respond ``no''.
		\item{(s2)} $(u,v) \in B$. We have two sub-cases depending on the type of access.
		\begin{description}
		\item{Relaxation:} If $(x_{2k-2},u)$ is marked, mark $(u,v)$. 
			If all edges in $A\cup B$ are marked, end phase $k$.
		\item{Edge query:}
			 If $(x_{2k-2},u)$ is not marked, respond ``no''.
			 So suppose that $(x_{2k-2},u)$ is marked.
			 In that case, if $(u,v)$ is not the last unmarked edge in $A\cup B$, mark it and respond ``no''.
			 If $(u,v)$ is the last unmarked edge, respond ``yes'' (without marking).
		\end{description}
		\item{(s3)} $(u,v)\notin A\cup B$. If this is an edge query, respond ``no''.
			If this is a relaxation for $(u,v)$, do nothing.
\end{description}
Let $(u^\ast,v^\ast)$ be the edge marked last in this phase. This is the edge $(u,v) \in B$ from
rule~(s2) that becomes marked when it gets relaxed with all other edges in $B$ already marked, ending the phase.
At this point the adversary lets $x_{2k-1} = u^\ast$ and $x_{2k} = v^\ast$, 
and (if $k < (n-1)/2$) starts phase $k+1$.

\smallskip

For any permutation $\pi$ of $V$ starting with $s$, through the rest of the proof we
denote by $D_\pi$ the variable D-values produced by $\algA$ when processing weight assignment $\weight_\pi$.
For each $k = 0,1,...,(n-1)/2$, the $(2k+1)$-permutation $x_0,x_1,...,x_{2k}$ chosen by
the adversary following the strategy above will be called the \emph{$k$th cruel prefix}.


\begin{figure}[t]
\begin{center}
\includegraphics[width = 4.5in]{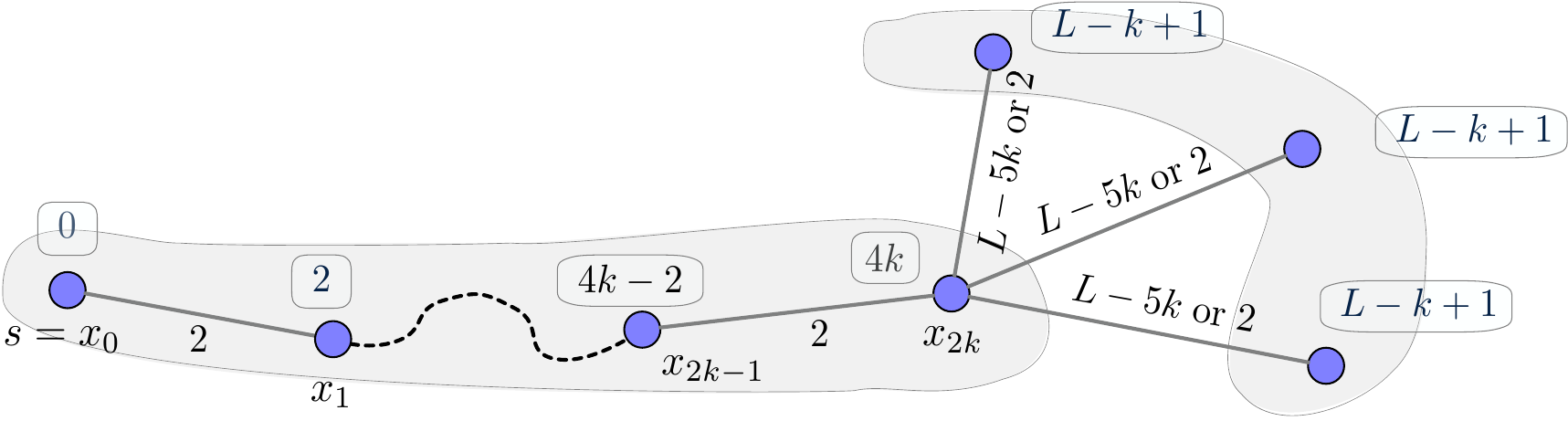}
\caption{The state of the game right after phase $k$ ends. Framed numbers next to vertices represent their D-values.
}\label{fig: relaxation proof invariant}
\end{center}
\end{figure}



\myparagraph{Invariant~(I)}.
We claim that the following invariant holds for each $k = 0,1,...,(n-1)/2$.
Let $x_0,x_1,...,x_{2k}$ be the adversary's $k$th cruel prefix.  Then for each permutation
$\pi$ starting with $x_0,x_1,...,x_{2k}$, the following properties hold when phase $k$ of the adversary strategy ends
(see Figure~\ref{fig: relaxation proof invariant} for illustration):
\begin{description}\setlength{\itemsep}{-0.03in}
	\item{(I0)} All adversary's answers to $\algA$'s queries in phases $1,2,...,k$ are correct for 
			weight assignment $\weight_\pi$.
	\item{(I1)} $D_\pi[x_j] = 2j$ for $j = 0,1,....,2k$.
	\item{(I2)} If $w\in V\setminus \braced{x_0,x_1,...,x_{2k}}$ then $D_\pi[w] =L-k+1$.
			
\end{description}

We postpone the proof of this invariant, and show first that it implies the $\Omega(n^3)$ lower bound.
Indeed, from Invariant~(I2) we conclude that
the D-values will not represent the correct distances until after the last step of phase $(n-1)/2$.
Since in each phase $k$ all edges in the set $A\cup B$ for phase $k$ will end up marked, 
the number of edge accesses in this phase is at
least $|A\cup B| = |A| + |A|(|A|-1) = |A|^2 = (n-2k+1)^2$. 
Thus, adding up the numbers of edge accesses in all phases $k = 1,2,...,(n-1)/2$, we obtain that
the total number of steps in algorithm~$\algA$ is at least 
$(n-1)^2 + (n-3)^2 + ... + 2^2 = \frac{1}{6}n(n^2-1) = \Omega(n^3)$, giving us the desired lower bound.

It remains to prove Invariant~(I). The invariant is true for $k=0$, by the way the D-values are initialized.
To argue that the invariant is preserved after each phase $k\ge 1$, we show that within this phase 
a more general invariant~(J) holds, below.


\myparagraph{Invariant~(J)}.
Let $\pi$ be a permutation with prefix $x_0,x_1,...,x_{2k}$, let
$X_{k-1} = x_0,x_1,...,x_{2k-2}$, and $Y_{k-1} = V\setminus X_{k-1}$.
We claim that the following properties are satisfied
during the phase, including right before and right after the phase.
\begin{description}\setlength{\itemsep}{-0.03in}
	\item{(J0)} All adversary's answers to $\algA$'s queries up to the current step are correct for $\weight_\pi$.
	\item{(J1)} $D_\pi[x_j] = 2j$ for $j = 0,1,...,2k-2$. 
	\item{(J2.1)} If $w\in Y_{k-1}\setminus\braced{u^\ast,v^\ast}$ then 
	\begin{minipage}[b]{3in}
	\begin{equation*}
			D_\pi[w] \;=\;
			\begin{cases} 
				L-k+2 &  \textrm{if} \; (x_{2k-2},w) \;\textrm{unmarked}
				\\
				L-k+1 & \textrm{if} \; (x_{2k-2},w) \;\textrm{marked}
			\end{cases}
	\end{equation*}
	\end{minipage}
	\item{(J2.2)} If $w = u^\ast$ then
	\begin{minipage}[b]{3in}
	\begin{equation*}
			D_\pi[u^\ast] \;=\;
			\begin{cases} 
				L-k+2 &  \textrm{if} \; (x_{2k-2},u^\ast) \;\textrm{unmarked}
				\\
				4k-2 & \textrm{if} \; (x_{2k-2},u^\ast) \;\textrm{marked}
			\end{cases}
	\end{equation*}
	\end{minipage}
	\item{(J2.3)} If $w = v^\ast$ then 
		\begin{minipage}[b]{3in}
	\begin{equation*}
			D_\pi[v^\ast] \; =\;
			\begin{cases} 
				L-k+2 &  \textrm{if} \; (x_{2k-2},v^\ast) \;\textrm{unmarked}
				\\
				L-k+1 & \textrm{if} \; (x_{2k-2},v^\ast) \;\textrm{marked and} \; (u^\ast,v^\ast) \; \textrm{unmarked}
				\\
				4k    & \textrm{if} \; (u^\ast,v^\ast) \;\textrm{marked}
			\end{cases}
	\end{equation*}
	\end{minipage}
\end{description}
When phase $k$ starts, these properties are identical to Invariant~(I) applied to the ending of phase $k-1$.
We now show that these invariants are preserved within a phase $k$. 
Assume that the invariants hold up to some step, and consider the next operation when $\algA$
accesses an edge $(u,v)$. If $w\neq v$ then $D_\pi[w]$ is not affected, so assume that $w = v$.
In the case analysis below, if the current step is a relaxation,
we will use notation $D_\pi[v]$ for the D-value at $v$
before this step and $D'_\pi[v]$ for the D-value at $v$ after the step.


\smallskip\noindent
\mycase{(s1)} $(u,v) = (x_{2k-2},v)\in A$. We consider separately the cases when this step is a relaxation or an edge query. 
\begin{description}\setlength{\itemsep}{-0.025in}
\item{\emph{Relaxation:}} 
Suppose first that $v\neq u^\ast$. Then we have $D_\pi[x_{2k-2}] + \weight(x_{2k-2},v) = (4k-4) + (L-5k+5) = L-k+1$.
Thus, using~(J2.1) and~(J2.3), if $(x_{2k-2},v)$ was already marked then nothing changes, and if $(x_{2k-2},v)$ wasn't marked
then $D'_\pi[v] = L-k+1$, preserving~(J2) because $(x_{2k-2},v)$ gets marked. (Note that in the special case
$v = v^\ast$, edge $(u^\ast,v^\ast)$ is not marked yet.)

For $v = u^\ast$ the argument is similar, except that now we
use~(J2.2): We have $D_\pi[x_{2k-2}] + \weight(x_{2k-2},u^\ast) = (4k-4) + 2 = 4k-2$,
so either $(x_{2k-2},u^\ast)$ is already marked and nothing changes, or $D'_\pi[u^\ast]=4k-2$ and $(x_{2k-2},u^\ast)$ gets marked.
\item{\emph{Edge query:}} 
The reasoning here is analogous to the case of relaxation above. If $v\neq u^\ast$, then
$D_\pi[x_{2k-2}] + \weight(x_{2k-2},v) = L-k+1$ and the correctness of the adversary's answers follows
from~(J2.1) and~(J2.3),
If $v = u^\ast$, then $D_\pi[x_{2k-2}] + \weight(x_{2k-2},u^\ast) = (4k-4) + 2 = 4k-2$,
and the correctness of the adversary's answers follows from~(J2.2).
\end{description}


\smallskip\noindent
\mycase{(s2)} $(u,v) \in B$. We consider separately the cases when this step is a relaxation or an edge query. 
\begin{description}\setlength{\itemsep}{-0.025in}
\item{\emph{Relaxation:}} Suppose first that $u\neq u^\ast$. Then $D_\pi[u]\ge L-k+1$, by~(J2.1) and~(J2.3).
(This is true  for the special case $u = v^\ast$, because $(u^\ast,v^\ast)$ is not yet marked.)
Since also $\weight(u,v)\ge 2$, this relaxation will not change the value of $D_\pi[v]$.

Next, consider the case $u = u^\ast$. If $(x_{2k-2},u^\ast)$ is unmarked then~(J2.2) also implies (as in the previous sub-case) that
the value of $D_\pi[v]$ will not change. 
So assume now that $(x_{2k-2},u^\ast)$ is marked, in which case $D_\pi[u^\ast] = 4k-2$.
If $v\neq v^\ast$ then the relaxation will not change the value of $D_\pi[v]$ because $\weight(u^\ast,v) = L$.
For $v = v^\ast$, we have $D'_\pi[v^\ast] = D_\pi[u^\ast] + \weight(u^\ast,v^\ast) = (4k-2) + 2 = 4k$,
preserving~(J2.3), because this relaxation will mark $(u^\ast,v^\ast)$.
\item{\emph{Edge query:}} If $(x_{2k-2},u)$ is not marked then, by (J2.1)-(J2.3) we have $D_\pi[u] = L-k+2$, 
and $\weight(u,v)\ge 2$, so the ``no'' answer by the adversary is correct.

Next, assume that $(x_{2k-2},u)$ is marked and $u\neq u^\ast$. Then $D_\pi[u] = L-k+1$, by
conditions~(J2.1) and~(J2.3) (since $(u^\ast,v^\ast)$ is still not marked). So in this case the
answer ``no'' is also correct. 

The final case is when $u = u^\ast$ and $(x_{2k-2},u^\ast)$ is marked, so $D_\pi[u^\ast] = 4k-2$.
Now, if $v\neq v^\ast$ then the adversary responds ``no'', and since $\weight(u^\ast,v)\ = L$, this is correct.
For $v = v^\ast$ we have $D_\pi[u^\ast] + \weight(u^\ast,v^\ast) = (4k-2) + 2 = 4k < D_\pi[v^\ast]$,
where the last inequality is true because $(u^\ast,v^\ast)$ is not marked.
So the ``yes'' answer is also correct. 
\end{description}


\smallskip\noindent
\mycase{(s3)} $(u,v)\notin A\cup B$. In this case we claim that $D_\pi[u] + \weight(u,v) \ge D_\pi[v]$,
which implies the correctness for both cases, when this operation is a relaxation and edge update.
The argument involves a few cases.

The first case is when $u\in Y_{k-1}$ and $v\in X_{k-1}$. Then we have $D_\pi[u] + \weight_{uv} \ge (4k-2) +2 > D_\pi[v]$, applying~(J1)-(J2.3). 

If $u\in X_{k-1}\setminus \braced{x_{2k-2}}$ and $v\in Y_{k-1}$ then there are two sub-cases,
and in both we apply~(J1) and~(J2.1)-(J2.3).
If $u = x_{2k-3}$ in which case $\weight(x_{2k-3},v) = L$,  the claim is trivial.
If $u = x_j$ for $j\le 2k-4$, then $D_\pi[x_j] = 2j$ and $\weight(x_j,v)\ge L-5j/2$,
so $D_\pi[u] + \weight(u,v) \ge (2j) + (L-5j/2) = L - j/2 \ge  L-k+2 \ge D_\pi[v]$.

The final case is when $u,v\in X_{k-1}$, say $u = x_i$ and $v = x_j$. Here we use condition~(J1).
If $i>j$ then $D_\pi[x_i] > D_\pi[x_j]$.
If $i< j-1$ then $\weight(x_i,x_j) \ge 2n$.
If $i = j-1$ then $D_\pi[x_{j-1}] = 2j-2$, $D_\pi[x_j] = 2j$ and $\weight(x_{j-1},x_j) = 2$.
In each of these sub-cases, the claim holds.

\smallskip\noindent

The case analysis above completes the proof of invariants~(J0)-(J2.3). By applying these invariants
to the end of the phase, when all edges in $A\cup B$ are marked, gives us that invariant~(I) holds after
the phase, as needed --- providing that the phase ends at all.

To complete the analysis of the adversary strategy we need to argue that phase $k$ must actually end,
in order for the D-values to represent correct distances from $s$. 
This follows directly from invariants~(J1)-(J2.3), because they imply that before the very last
step of the phase there is at least one vertex in $Y_{n-1}$ with D-value at least $L-k+1$,
which is larger than its distance from $s$.

It now only remains to remove the assumption that the D-values are initialized to $L+1$.
According to our model, they need to be initialized to edge lengths from $s$, which are:
$\ell(s,x_1) = 2$ and $\ell(s, v) = L$ for $v \ne x_1$ (since $s=x_0$). 
With this initialization, we only need to modify the first phase by marking all edges of the form $(x_0, v)$ immediately. 
Invariant~(J) then applies without further modification.


\section{Lower Bound for Deterministic Algorithms with Three Types of Queries}
\label{sec: deterministic algorithms three types of queries}


In this section, we prove Theorem~\ref{thm: all queries lower bound}(a), an $\Omega(n^3)$ lower
bound for deterministic algorithms using all three types of queries.
Our argument is essentially a reduction --- we show that any algorithm $\algA$ that uses 
D-queries, weight queries, edge queries and relaxation updates can be converted into an algorithm $\algB$
that has the same time complexity as $\algA$ and uses only edge queries and relaxations.
Our lower bound will then follow from Theorem~\ref{thm: relaxation-outcome lower bound}(a).

We start with some initial observations that, although not needed for the proof, contain some useful insights.
Since edge weights do not change, an algorithm can use weight queries to pre-sort all
edges in time $O(n^2 \log n)$, and then it doesn't need to make any more weight queries
during the computation.  This way, the algorithm's running time is not affected as long as it's at least $\Omega(n^2\log n)$.
Similarly, the algorithm can use D-queries to  maintain the total order of the D-values using, say, a binary search tree,
paying a small overhead of $O(\log n)$ for each update operation.
Then the algorithm's decisions at each step can as well depend on the total ordering of the 
vertices according to their current D-values. These changes will add at most an $O(\log n)$ factor to the running time.


\myparagraph{Potential-oblivious model.} 
Instead of working just with edge queries, we generalize our argument to \emph{potential-oblivious} query models.
We say that a query model $\modelQ$ is potential-oblivious if it satisfies the following
property for each weight assignment $\weight$ and potential $\phi$:
for any sequence of relaxations and queries from $\modelQ$
(with the D-values initialized as described in Section~\ref{sec: preliminaries}),
the outcomes of the queries for weight assignments $\weight$ and $\weight+\Delta\phi$ are the same.  
By routine induction, any algorithm using
a potential-oblivious model will perform the same sequence of queries and relaxations on assignments $\weight$ and $\weight+\Delta\phi$.
Also, it will compute the correct distances on $\weight$ if and only if it will compute them for $\weight+\Delta\phi$,
and in the same number of steps.
(To see this, note that for each vertex $v$ the invariant $D'[v] = D[v] + \phi(v)$ is preserved, where we use
notations $D$ and $D'$ to distinguish between the D-values in the computations for $\weight$ and $\weight'$.
The respective distances $\weight(s,v)$ and $\weight'(s,v)$ satisfy the same equation.)

For example, the edge query only model is potential-oblivious. Due to our initialization and
properties of relaxations, each value $D[v]$ always corresponds to the length of some path from $s$ to $v$.
Then the query is equivalent to comparing the length of two paths
with the same start and end points, and the query outcome is the same after adding $\Delta\phi$, by the distance preservation property.
Using these facts, potential-obliviousness follows from induction on the number of operations performed.


\myparagraph{Golomb-ruler potential.}
For our proof, we need a potential function $\phi$ for which in the induced weight assignment $\Delta\phi$
all edge weights are different. (This naturally implies that all values of $\phi$ are also different.)
Such an assignment is equivalent to a \emph{Golomb ruler} (also known as a Sidon set), 
which is a set of non-negative integers with unique pair-wise differences. 
A simple Golomb ruler can be constructed using fast growing sequences, such as $\braced{2^i-1}_{i=0}^{n-1}$,
but we are interested in sets contained in a small polynomial-in-$n$ range.
The asymptotic growth of Golomb rulers is well studied;
it is known that there are $n$-element Golomb rulers that are subsets of $\braced{1,2,...,N}$, for 
$N = n^2(1+o(1))$~\cite{1941_erdos_sidon,1938_singer_projective_geometry}, and that this bound on $N$ is essentially optimal.
Since the Golomb-ruler property is invariant under shifts, we can assume that a Golomb ruler contains number $0$.
For our purposes, this means that there exists a potential function $\phi$ that induces distinct edge weights with 
absolute maximum weight $O(n^2)$. (\cite{2005_bollobas_differences} shows that it is possible to obtain smaller maximum 
weights for certain classes of non-complete graphs, but this is not relevant to our constructions.)
We will call this function a \emph{Golomb-ruler potential}.


\begin{theorem}\label{thm: reduction theorem}
Let $\algA$ be a query/relaxation-based algorithm that uses relaxation updates, D-queries, weight queries and any queries from a 
potential-oblivious model $\modelQ$, and let $T(n)$ be the running time of $\algA$.
Then there is an algorithm $\algB$ with running time $O(T(n))$ that uses only relaxation updates and queries from $\modelQ$.
\end{theorem}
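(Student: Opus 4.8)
The plan is to turn $\algA$ into $\algB$ by fixing, once and for all, a Golomb-ruler potential $\phi$ and having $\algB$ simulate $\algA$ ``in the reference frame shifted by $\phi$''. Concretely, $\algB$ runs on an input weight assignment $\weight$, but it pretends that $\algA$ is running on $\weight' = \weight + \Delta\phi$. Since $\phi$ is fixed and known, $\algB$ can compute all values $\phi_v$ and hence all differences $\phi_v-\phi_u$ without any queries. The key structural fact to exploit is the distance preservation / $D'[v]=D[v]+\phi_v$ invariant from Section~\ref{sec: preliminaries}: if $\algB$ maintains its own D-values $D$ with respect to $\weight$, these are in bijective correspondence with $\algA$'s D-values $D'$ with respect to $\weight'$ via the fixed shift $\phi_v$. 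So every relaxation update $\algA$ would perform on $\weight'$ is mirrored by the corresponding relaxation update of $\algB$ on $\weight$, and $\algB$ finishes computing correct $\weight$-distances exactly when the simulated $\algA$ finishes computing correct $\weight'$-distances --- in the same number of relaxation steps.

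Next I would show how $\algB$ answers, \emph{without any queries outside $\modelQ$}, the three ``extra'' query types that $\algA$ may ask. First, weight queries: $\algA$ asks ``$\weight'_{uv} < \weight'_{xy}$?'', i.e. ``$\weight_{uv}+\phi_v-\phi_u < \weight_{xy}+\phi_y-\phi_x$?''. Here the point of the Golomb ruler is used indirectly through the \emph{distance preservation property}: actually, the cleanest route is to note that $\algB$ need not answer weight queries by querying at all, because by the initial observations an algorithm can pre-sort all edges once; but in the reduction it is cleaner to just observe that a weight query on $\weight'$ is a comparison of two known affine expressions in the unknown $\weight_{uv},\weight_{xy}$ --- this does require knowing something about $\weight$. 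The honest fix, and I suspect the intended one, is: $\algB$ is allowed to make $\modelQ$-queries, and the whole reduction is set up so that the potential $\phi$ makes weight queries and D-queries on $\weight'$ \emph{determined by edge identity alone} (because $\Delta\phi$ has distinct weights, and on a Golomb-ruler potential-induced assignment every spanning tree is a shortest-path tree and D-values track potentials). So after any fixed sequence of relaxations, the multiset of D-values on $\weight'$, and their order, is a function only of the sequence of edges relaxed, not of $\weight$; hence $\algB$ can simulate D-queries for free. The same applies to weight queries once we restrict attention to the ``masking'' assignments that the lower bound construction will plug in, where $\weight = \weight_\pi + \Delta\phi$ and the $\Delta\phi$ part dominates the comparison. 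I would state this carefully: $\algB$ resolves D-queries and weight queries deterministically from the transcript so far, using only the fixed data $\phi$.

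Having done that, the proof assembles as a routine induction on the number of steps: maintain the invariant that after $t$ steps, $\algB$'s state (its D-values $D$, its position in $\algA$'s decision tree) corresponds to $\algA$'s state on $\weight'$ under the shift $D'[\cdot] = D[\cdot] + \phi_\cdot$, and that every query $\algA$ has asked so far was answered consistently with $\weight'$. At a relaxation node, both algorithms relax the same edge and the shift invariant is preserved because $\Delta\phi(u,v)=\phi_v-\phi_u$. At an edge query (or other $\modelQ$-query) node, $\algB$ forwards the query to its own oracle for $\weight$; potential-obliviousness of $\modelQ$ guarantees the outcome equals what $\algA$ sees on $\weight'$. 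At a D-query or weight-query node, $\algB$ answers from the fixed $\phi$ as above. Each step of $\algB$ costs $O(1)$ steps (the extra arithmetic with $\phi$ is free in the model, as $\phi$ is a constant of the construction, not part of the input), so the running time is $O(T(n))$, and correctness transfers because distances are preserved up to the shift.

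The main obstacle I anticipate --- and the place the argument needs the most care --- is justifying that $\algB$ can answer D-queries and weight queries ``for free'', i.e. without issuing queries that are not in $\modelQ$. This is where the potential-oblivious machinery and the specific choice of a Golomb-ruler $\phi$ do real work: one must argue that along any computation these two query types become functions of the edge-relaxation transcript alone (independently of the underlying $\weight$ that the lower-bound adversary will later choose), so that $\algB$ can hard-code the answers. I would isolate this as the crux lemma of the section, prove it by the same induction (tracking that D-values on $\weight+\Delta\phi$ equal $\phi_v$ plus a transcript-determined quantity, and that the $\Delta\phi$ component makes all relevant comparisons strict and transcript-determined), and only then run the simulation argument above. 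Everything else --- the bijection of D-values, the step-for-step matching, the $O(T(n))$ bound --- is bookkeeping.
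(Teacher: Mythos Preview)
Your high-level plan --- simulate $\algA$ on a potential-shifted assignment $\weight'$, mirror relaxations, forward $\modelQ$-queries using potential-obliviousness, and hard-code the answers to D-queries and weight queries from $\phi$ alone --- is exactly the paper's approach. But there is a genuine gap in how you make the last step work.

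You set $\weight' = \weight + \Delta\phi$ and then try to argue that D-queries and weight queries on $\weight'$ are determined by $\phi$ and the relaxation transcript, independently of $\weight$. This is false as stated. The D-values satisfy $D'[v] = D_\weight[v] + \phi_v$, where $D_\weight[v]$ is the length of some $s$-to-$v$ path \emph{in $\weight$}; it depends on the actual weight values, not just on which edges were relaxed. So the outcome of ``$D'[u] < D'[v]$?'' is ``$D_\weight[u] - D_\weight[v] < \phi_v - \phi_u$?'', and the left side can be arbitrarily large in absolute value compared with the right side. The same problem arises for weight queries. Your own hedging (``once we restrict attention to the masking assignments\ldots where the $\Delta\phi$ part dominates'') shows you sensed the issue, but pushing the domination requirement onto the adversary's later choice of $\weight$ would make $\algB$ correct only on special inputs, not an algorithm in the model.

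The missing ingredient is a \emph{scaling factor}: the paper uses $\weight' = \weight + c\,\Delta\phi$ with $c = 2\maxedgeweight n + 1$. Then $|D_\weight[u] - D_\weight[v]| < 2\maxedgeweight n < c$, while $c(\phi_v - \phi_u)$ is a nonzero multiple of $c$ (Golomb ruler ensures $\phi_v\neq\phi_u$), so the sign of the comparison is forced by $\phi_v - \phi_u$ alone; this is property~(p2). Similarly $|\weight_e - \weight_f| < c$ forces weight queries to be decided by $\Delta\phi(e)-\Delta\phi(f)$, which is nonzero by the Golomb-ruler property; this is property~(p1). With $c$ in place, your induction and the potential-obliviousness argument go through verbatim, since $c\phi$ is still a potential. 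Without it, the ``crux lemma'' you proposed cannot be proved.
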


The idea of the proof is to convert a given weight assignment $\weight$ into another assignment $\weight'$
such that, if only queries from $\modelQ$ (and relaxations) are used, then 
(i) $\weight'$ is indistinguishable from $\weight$ using the queries from $\modelQ$,
and (ii) in $\weight'$ the ordering of weights and the ordering of all D-values are \emph{independent of $\weight$} and,
further, the ordering of the D-values is \emph{fixed throughout the computation}, even though the D-values themselves may vary.
$\algB$ can do this conversion ``internally'' and simulate $\algA$ on $\weight'$, and then it doesn't
need to make any D-queries and weight queries, because their outcomes are predetermined.

\begin{proof}
Let $\algA$ be a query/relaxation algorithm for that uses D-queries, weight queries, queries from $\modelQ$, and relaxation updates.
We construct $\algB$ that uses only queries from $\modelQ$ and relaxation updates. 
Let $\phi$ be the Golomb-ruler potential defined before the theorem. When run on a weight assignment $\weight$,
$\algB$ will internally simulate $\algA$ on weight assignment $\weight' = \weight + c\Delta\phi$, for $c = 2\maxedgeweight n + 1$.
We use notation $D'$ for the D-values computed by $\algA$.
The actions of $\algB$ depend on the execution of  $\algA$ on $\weight'$, as follows:
\begin{itemize}\setlength{\itemsep}{-0.02in}
\item When $\algA$ executes a weight query ``$\weight'_{uv} < \weight'_{xy} ? $'', $\algB$  directly executes the ``yes'' branch
from the query if $\Delta\phi(u,v) < \Delta\phi(x,y)$, or the ``no'' branch otherwise.
\item When $\algA$ executes a D-query ``$D'[u]<D'[v]?$'', then $\algB$ executes the ``yes'' branch if $\phi_u < \phi_v$, else it executes the ``no'' branch. 
\end{itemize}

This simulation can be more formally described as converting the decision tree of $\algA$ into the decision tree of $\algB$.
The tree of $\algB$ is obtained by splicing out each node $q$ representing a D-query or weight query.
This splicing consists of connecting the parent of $q$ to either the ``yes'' or ``no'' child of $q$,
determined by the appropriate inequality involving $\phi$, as explained above.

\smallskip

It remains to prove the correctness of $\algB$. We argue first
that $\algB$ will produce correct distances if run on $\weight'$ instead of $\weight$. For this,
we observe that $\weight'$ satisfies the following properties:
\begin{description}\setlength{\itemsep}{-0.02in}
\item{(p1)} For any two edges $e,f$, 
			we have $\weight'_e < \weight'_f$ if and only if $\Delta\phi(e) < \Delta\phi(f)$.

\item{(p2)} For any three vertices $u,x,y$, any $u$-to-$x$ path $P_x$ and any $u$-to-$y$ path $P_y$,
				we have $\weight'(P_x) < \weight'(P_y)$ if and only if $\phi_x < \phi_y$.

\end{description}
Indeed, both properties follow from the choice of $c$ and straighforward calculation.
For~(p1), $\weight'_e < \weight'_f$ if and only if	$\weight_e - \weight_f < c [\Delta\phi(f) - \Delta\phi(e)]$,
and because $|\weight_e - \weight_f| < c$ this inequality is determined by the sign of $\Delta\phi(f) - \Delta\phi(e)$, which is
always non-zero, by the Golomb-ruler property. (Note that here we only use that $c > 2\maxedgeweight$.)
The justification for~(p2) is similar: we have $\weight'(P_x) = \weight(P_x) + c(\phi_x - \phi_u)$ 
and  $\weight'(P_y) = \weight(P_y) + c(\phi_y - \phi_u)$, so
$\weight'(P_x) < \weight'(P_y)$ if and only if 
$\weight(P_x) - \weight(P_y) < c [\phi_y - \phi_x ]$, and since $|\weight(P_x) - \weight(P_y| < c$ this
inequality is determined by the sign of $\phi_y - \phi_x$.

Properties~(p1) and~(p2) imply that when we run $\algA$ on $\weight'$, in each weight query we
can equivalently use assignment $\Delta\phi$ instead of $\weight'$, and
instead of using a D-query we can compare the corresponding potential values. 
Therefore $\algB$ works correctly for $\weight'$.
But since now $\algB$ uses only relaxations and queries from $\modelQ$, that are
potential-oblivious, and $\weight'$ is obtained from $\weight$ by adding a weight
assignment induced by potential $c\phi$, $\algB$'s computation on $\weight$ will also be correct.
\end{proof}

Theorem~\ref{thm: reduction theorem}, together with Theorem~\ref{thm: relaxation-outcome lower bound}
implies the $\Omega(n^3)$ lower bound for query/update-based algorithms that use D-queries, weight queries,
any set of potential-oblivious queries, and relaxation updates.
Since the edge update is potential oblivious, Theorem~\ref{thm: all queries lower bound}(a) follows.

Further, using the construction from Theorem~\ref{thm: all queries lower bound}(a), where a weight assignment with
maximum weight $O(n)$ was used, the proof of Theorem~\ref{thm: reduction theorem} shows that
Theorem~\ref{thm: all queries lower bound}(a) holds even if all weights are bounded by $O(n^4)$. 


\myparagraph{A side result for general graphs.} 
The reduction in the proof of Theorem~\ref{thm: reduction theorem} extends naturally to arbitrary graphs. 
In particular, we can extend a result from Eppstein \cite{2023_Eppstein_non-adaptive_shortest_path}:

\begin{theorem}\label{thm: arbitrary graph}
	For any $n$ and $m$ where $n \le m \le n(n-1)$, 
	there exists a graph with $n$ nodes and $m$ edges where any deterministic algorithm $\algA$ using D-queries, weight queries, and relaxation updates 
	has worst-case running time $\Omega(nm/\log n)$. If $m = \Omega(n^{1+\varepsilon})$ for some $\varepsilon > 0$, the lower bound can be improved to $\Omega(nm)$.
\end{theorem}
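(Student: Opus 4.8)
The plan is to combine the reduction of Theorem~\ref{thm: reduction theorem} with Eppstein's lower bound for non-adaptive (oblivious) algorithms on sparse graphs. Recall that Eppstein~\cite{2023_Eppstein_non-adaptive_shortest_path} proves that for every $n$ and $m$ with $n\le m\le n(n-1)$ there is an $n$-vertex, $m$-edge graph on which any oblivious (universal) relaxation sequence must have length $\Omega(nm/\log n)$, improving to $\Omega(nm)$ when $m=\Omega(n^{1+\varepsilon})$. We want to upgrade ``oblivious'' to ``adaptive, with D-queries and weight queries available''. The key observation is that the edge-query-only model is potential-oblivious (as noted in the excerpt), and more generally the empty query model is trivially potential-oblivious; so the reduction of Theorem~\ref{thm: reduction theorem} applies to turn an adaptive algorithm using D-queries, weight queries and relaxations into an algorithm $\algB$ that uses \emph{only} relaxations (no queries at all), with the same asymptotic running time, by absorbing both the weight ordering and the D-ordering into a Golomb-ruler potential $\phi$.

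\textbf{First,} I would verify that Eppstein's hard instances survive the modification $\weight \mapsto \weight' = \weight + c\,\Delta\phi$ used in the reduction. For a general graph $G=(V,E)$, a Golomb ruler gives a potential $\phi:V\to\integers$ with $\phi(s)=0$ and all values of $\phi$ distinct (so in particular all induced edge weights $\Delta\phi(u,v)=\phi_v-\phi_u$ are pairwise distinct on $E$), with $\max_v|\phi_v| = O(n^2)$. Taking $c = 2\maxedgeweight\, n + 1$ as in the proof of Theorem~\ref{thm: reduction theorem}, properties (p1) and (p2) hold verbatim: the weight order on $E$ under $\weight'$ equals the order under $\Delta\phi$, and for any common-endpoint pair of paths the length comparison under $\weight'$ is governed solely by the sign of $\phi_y-\phi_x$. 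Since these arguments only used the triangle-type bound $|\weight(P_x)-\weight(P_y)|<c$ and the distinctness of $\phi$-differences — nothing about completeness — they go through for arbitrary $G$. Hence $\algB$, run on the original instance, performs exactly the relaxation sequence that $\algA$-simulated-on-$\weight'$ would, and it is query-free; its running time is the running time of $\algA$ up to a constant factor.

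\textbf{Next,} I would observe that a query-free algorithm in our decision-tree model is exactly a fixed (oblivious) relaxation sequence on the graph $G$, because with no query nodes the decision tree is a path. To make this an \emph{oblivious} sequence in Eppstein's sense I must be slightly careful: $\algB$ internally chooses the potential $\phi$ once (depending only on $n$, not on $\weight$) and simulates $\algA$ on $\weight+c\Delta\phi$; since $\algA$'s query outcomes on $\weight'$ are replaced by the $\phi$-determined branches, which are \emph{also} independent of $\weight$, the resulting relaxation sequence is the same for every input weight assignment on $G$. It is therefore a universal relaxation sequence for $G$ and must, by Eppstein's theorem for that particular hard graph, have length $\Omega(nm/\log n)$ — and $\Omega(nm)$ when $m=\Omega(n^{1+\varepsilon})$. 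Pulling the constant factor back through the reduction, $\algA$ has worst-case running time $\Omega(nm/\log n)$ (resp. $\Omega(nm)$), which is the claim. One should also note, as in the complete-graph case, that the weights stay polynomially bounded: $\weight'$ has maximum absolute weight $O(\maxedgeweight \cdot n \cdot n^2)$, which is polynomial provided Eppstein's construction uses polynomially bounded weights (it does).

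\textbf{The main obstacle} I anticipate is not the reduction itself — that is essentially a reread of Theorem~\ref{thm: reduction theorem} with ``complete graph'' deleted — but confirming the precise form of Eppstein's sparse-graph lower bound and, in particular, that it is stated (or readily restated) for a \emph{single} fixed graph per $(n,m)$ rather than merely as a graph family, and that it allows \emph{randomized} oblivious sequences only in the regimes we need (we only claim the deterministic/worst-case bound here, so this is comfortable). A secondary, purely bookkeeping point is checking that $\algB$'s internal simulation of $\weight'$ does not require $\algA$ to be correct on weight assignments outside the range Eppstein uses: since potential-obliviousness guarantees $\algA$ computes correct distances on $\weight'$ iff it does on $\weight$, and $\algA$ is assumed correct on all integer weight assignments, this is automatic. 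I would therefore structure the proof as: (1) recall Eppstein's statement; (2) invoke Theorem~\ref{thm: reduction theorem} with the empty query model, noting its proof is graph-agnostic; (3) identify the query-free output algorithm $\algB$ with an oblivious relaxation sequence on Eppstein's hard graph and conclude.
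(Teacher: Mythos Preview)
Your proposal is correct and follows essentially the same approach as the paper: apply the reduction of Theorem~\ref{thm: reduction theorem} with the empty (hence trivially potential-oblivious) query model $\modelQ$, observe that the resulting query-free algorithm is exactly a universal relaxation sequence on Eppstein's hard graph, and invoke Eppstein's lower bound. Your write-up is in fact more detailed than the paper's sketch, explicitly checking that properties~(p1) and~(p2) use nothing about completeness and that the $\phi$-determined branches make the relaxation sequence input-independent.
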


\begin{proof}[Proof sketch]
	We focus on the case where $m$ is arbitrary. First note that the model using no queries and relaxation updates is equivalent to non-adaptive algorithms 
	(that is, universal relaxation sequences)
	described in \cite{2023_Eppstein_non-adaptive_shortest_path}. (It's also obvious that the query model using no queries is potential-oblivious.) 
	Let $G$ be the graph construction described in the proof of \cite[Theorem~3]{2023_Eppstein_non-adaptive_shortest_path}. 
	In particular, $G$ has $n$ nodes and $m$ edges, and for every non-adaptive algorithm on $G$, there is weight assignment that forces $\Omega(nm/\log n)$ relaxations. 
	If there exists an algorithm $\algA$ for $G$ using D-queries, weight queries, and relaxation updates that runs in $T(n)$ time, then by the same construction 
	as in Theorem~\ref{thm: reduction theorem}, there also exists an algorithm $\algB$ using only relaxation updates that runs in time $O(T(n))$. 
	Then an $o(nm/\log n)$ running time on $G$ would contradict the lower-bound on non-adaptive algorithms. The proof for the case $m = \Omega(n^{1+\varepsilon})$ is identical.
\end{proof}

As explained in Section~\ref{sec: lower bounds for randomized algorithms}, the reduction also applies to randomized algorithms, 
and because \cite{2023_Eppstein_non-adaptive_shortest_path} proves the same lower bounds for expected running time for randomized non-adaptive algorithms, 
the above bounds also apply to the randomized case.


\section{Lower Bounds for Randomized Algorithms}
\label{sec: lower bounds for randomized algorithms}

In this section, we extend the proofs in Sections~\ref{sec: deterministic algorithms relaxation queries}
and~\ref{sec: deterministic algorithms three types of queries} to obtain $\Omega(n^3)$ lower bounds
for randomized algorithms, proving Theorem~\ref{thm: relaxation-outcome lower bound}(b) and Theorem~\ref{thm: all queries lower bound}(b).
The proofs are based on Yao's principle~\cite{1977_yao_probabilistic_computations}: we give a
probability distribution on weight assignments for which the expectation of each deterministic algorithm's running time is $\Omega(n^3)$.

We fix the value of $n$. Let $\maxedgeweight$ be the maximum absolute value of weights used in the proof, whose value will
be specified later. Let $\weightset$ be the family of all weight assignments $\weight: E \to [-\maxedgeweight,\maxedgeweight]$.
Denote by $\algorithms$ the (finite) set of all deterministic query/relaxation-based algorithms with running time at most $2n^3$. 
We only need to consider algorithms in $\algorithms$, because any other algorithm in our model can be modified
to run in time at most  $2n^3$. To see why,
consider this algorithm's decision tree. For any node at depth $n^3$, replace its subtree by the Bellman-Ford
relaxation sequence. 
The resulting tree remains correct, and its depth is at most $2n^3$.

For a deterministic algorithm $\algA\in\algorithms$ and weight assignment $\weight \in \weightset$, denote by 
$\runtime(\calA,\weight)$ the running time of $\algA$ on assignment $\weight$. 
Let $\Pi(\algorithms)$ be the set of all probability distributions on $\algorithms$
and $\Pi(\weightset)$ be the set of all probability distributions on $\weightset$.
Any randomized algorithm $\randalgR$ is simply a probability distribution on $\algorithms$, so $\randalgR\in \Pi(\algorithms)$.
Denote by $\Exp_{x\sim\theta} f(x)$ the expected value of $f(x)$, for a random variable $x$ from distribution $\theta$.
The lemma below is a restatement of Yao's principle~\cite{1977_yao_probabilistic_computations} in our context:


\begin{lemma}\label{lem: yaos principle}
The following equality holds:
\begin{equation*}
	\min_{\randalgR\in \Pi(\algorithms)} \max_{\weight\in\weightset} \Exp_{\algA\sim\randalgR} \runtime(\algA,\weight)
				\;=\; \max_{\sigma\in\Pi(\weightset)} \min_{\algA\in\algorithms} \Exp_{\weight\sim\sigma} \runtime(\algA,\weight).
\end{equation*}
\end{lemma}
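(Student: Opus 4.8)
The plan is to derive Lemma~\ref{lem: yaos principle} as a direct instance of von Neumann's minimax theorem (equivalently, LP duality), applied to the two-player zero-sum game whose payoff matrix is $\runtime(\algA,\weight)$. The key observation making this legitimate is that both $\algorithms$ and $\weightset$ are \emph{finite} sets: $\weightset$ is finite because weights are integers confined to $[-\maxedgeweight,\maxedgeweight]$ and $E$ is finite, and $\algorithms$ is finite because, as noted just before the lemma, every relevant algorithm can be taken to have decision-tree depth at most $2n^3$, and there are only finitely many decision trees of bounded depth over a finite set of queries and relaxations on a fixed vertex set. Hence the payoff function $\runtime\colon \algorithms\times\weightset\to\R$ is a genuine finite matrix, and the minimax theorem applies with no topological subtleties.

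First I would set up the game explicitly: the minimizing player (algorithm designer) picks $\randalgR\in\Pi(\algorithms)$, the maximizing player (adversary) picks $\sigma\in\Pi(\weightset)$, and the payoff is the bilinear form $\Exp_{\algA\sim\randalgR}\Exp_{\weight\sim\sigma}\runtime(\algA,\weight)$. Then I would invoke von Neumann's minimax theorem to get
\begin{equation*}
	\min_{\randalgR\in\Pi(\algorithms)} \max_{\sigma\in\Pi(\weightset)} \Exp_{\algA\sim\randalgR}\Exp_{\weight\sim\sigma} \runtime(\algA,\weight)
	\;=\;
	\max_{\sigma\in\Pi(\weightset)} \min_{\randalgR\in\Pi(\algorithms)} \Exp_{\algA\sim\randalgR}\Exp_{\weight\sim\sigma} \runtime(\algA,\weight).
\end{equation*}
The remaining work is purely the standard ``reduction to pure strategies'' argument on each side. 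On the left, for any fixed $\randalgR$ the inner expectation is an average over $\sigma$ of the numbers $\Exp_{\algA\sim\randalgR}\runtime(\algA,\weight)$, so it is maximized at a vertex of the simplex $\Pi(\weightset)$, i.e.\ at a point mass on a single $\weight\in\weightset$; this collapses $\max_{\sigma\in\Pi(\weightset)}$ to $\max_{\weight\in\weightset}$. Symmetrically, on the right, for fixed $\sigma$ the quantity $\Exp_{\algA\sim\randalgR}\Exp_{\weight\sim\sigma}\runtime(\algA,\weight)$ is an average over $\randalgR$ of the numbers $\Exp_{\weight\sim\sigma}\runtime(\algA,\weight)$, hence minimized at a point mass on a single $\algA\in\algorithms$, collapsing $\min_{\randalgR\in\Pi(\algorithms)}$ to $\min_{\algA\in\algorithms}$. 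Combining these two simplifications with the minimax equality yields exactly the displayed identity of the lemma.

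There is no serious obstacle here; the only point requiring care is the justification that $\algorithms$ may be taken finite without loss of generality, which the paper has already handled in the paragraph preceding the lemma (truncate any decision tree at depth $n^3$ and graft on the Bellman--Ford sequence, obtaining depth $\le 2n^3$, and note finitely many such trees exist). One should also remark that restricting the minimizing player to distributions over this finite $\algorithms$ rather than over all randomized algorithms is legitimate precisely because any randomized algorithm achieving expected running time below the stated bound could be truncated the same way, so the restriction does not change the value of the left-hand side. With those remarks in place, the proof is a two- or three-line appeal to the minimax theorem plus the observation that extrema of linear functions over simplices are attained at vertices.
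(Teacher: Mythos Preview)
Your proposal is correct and is precisely the standard derivation of Yao's principle via von Neumann's minimax theorem; the paper itself does not prove this lemma at all but simply states it as ``a restatement of Yao's principle~\cite{1977_yao_probabilistic_computations} in our context,'' relying on the finiteness of $\algorithms$ and $\weightset$ established in the preceding paragraph. Your write-up is thus more detailed than what the paper provides, but there is no divergence in approach.
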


In this lemma, both sides involve the expected running time, with the difference being that 
on the left-hand side we consider randomized algorithms and their worst-case inputs,
while the right-hand side involves the probability distribution on \emph{input permutations} that is worst 
for \emph{deterministic algorithms}.


\emparagraph{Proof of Theorem~\ref{thm: relaxation-outcome lower bound}(b).}
(Sketch\footnote{A detailed proof will
appear in the full version of this paper.}.)
We give a probability distribution $\sigma$ on weight assignments $\weight$ for which every
deterministic algorithm needs $\Omega(n^3)$ steps in expectation to compute correct distances.
This is sufficient, as then Lemma~\ref{lem: yaos principle} implies that each
randomized algorithm $\randalgR$ makes  $\Omega(n^3)$ steps in expectation on some weight assignment.

Recall that in the proof of Theorem~\ref{thm: relaxation-outcome lower bound}(a) in Section~\ref{sec: deterministic algorithms relaxation queries}
we used weight assignments associated with permutations of vertices. This is also the case here, although this assignment needs to be modified.
For any permutation $\pi = x_0,x_1,...,x_{n-1}$ of the vertices, the corresponding weight assignment is
\begin{equation*}
\weight_\pi(x_i,x_j) \;=\; 
			\begin{cases}
						n	 	& \;\textrm{if}\; j = i+1
						\\
						L-(n+\tfrac{1}{2})i &\;\textrm{if}\; j \ge i+2 \;\textrm{and}\; i \;\textrm{is even}
						\\
						L	& \;\textrm{if}\; j \ge i+2 \;\textrm{and}\; i \;\textrm{is odd}
			\end{cases}
\end{equation*}
where $L$ is sufficiently large, say $5n^2$. (We explain later why larger weights are necessary.)
In our argument here, the adversary chooses the uniform distribution $\sigma$ on all $(n-1)!$ permutations $\pi$ starting with $s$.

In a certain sense, our goal now is simpler than in Section~\ref{sec: deterministic algorithms relaxation queries}, as the
adversary's job, which is to choose $\sigma$, is already done. We ``only'' need to lower bound the expected running time of
algorithms from  $\algorithms$ if the weights are distributed according to $\sigma$.
The challenge is that this argument needs to work for \emph{an arbitrary algorithm} from $\algorithms$.

So fix any deterministic algorithm $\algA \in \algorithms$.
We need to prove that $\Exp_{\weight\sim\sigma} \runtime(\algA,\weight) = \Omega(n^3)$.
A high-level approach in our proof is similar to the proof in Section~\ref{sec: deterministic algorithms relaxation queries}: we
partition the computation of $\algA$ into $(n-1)/2$ phases, and show that the expected length of each phase $k = 1,2,...,(n-1)/2$
is $\Omega((n-2k)^2)$. 

For a specific permutation $\pi = x_0,x_1,...,x_{n-1}$ with $x_0 = s$ and $k = 1,2,...,(n-1)/2$, 
let $t_k(\pi)$ be the first time step such that in steps $1,2,...,t_k(\pi)$ the edges
$(x_0,x_1), ..., (x_{2k-1},x_{2k})$ have been accessed by $\algA$ (that is, relaxed or queried) in this particular order.
We refer to the time interval $(t_{k-1}(\pi),t_k(\pi)]$ as \emph{phase $k$ for permutation $\pi$}.


The proof idea is this: In each phase $k$ of the strategy in Section~\ref{sec: deterministic algorithms relaxation queries} the adversary
was able to force the algorithm to relax all edges from $x_{2k-2}$ to $Y$ before revealing edge $(x_{2k-1},x_{2k})$, thus
ensuring that at all times all D-values differ at most by $1$.  This is not possible anymore,
because now the algorithm can get ``lucky'' and relax edges $(x_{2k-2},x_{2k-1})$ and $(x_{2k-1},x_{2k})$ before all edges 
$(x_{2k-2},u)$ for $u\in Y$ are
relaxed, and then the D-values for such vertices $u$ will reflect the relaxations that occurred in some earlier phases.
But we can still bound the differences between D-values. Namely,
by our choice of the length function above, any two D-values will differ by at most $n/2$. 
Thus, since all edge lengths are at least $n$, the negative answers to all edge queries inside $Y$ are still correct,
independently of the suffix $x_{2k-1},...,x_{n-1}$ of $\pi$.

More specifically, the analysis is based on establishing two invariants, captured by the claim below (formal proof omitted here).

\begin{claim}\label{cla: randomized invariant}
The following invariants are satisfied when each phase $k$ starts:
\begin{description}\setlength{\itemsep}{-0.025in}
\item{(R1)} The computation of $\algA$ up until phase $k$ starts is
		independent of the suffix $x_{2k-1}, x_{2k},...,x_{n-1}$ of $\pi$.
\item{(R2)} The D-values have the following form:
		$D[x_j] = jn$ for $j\le 2k-2$,
		and $D[x_j] \in [L-k+1,L]$ for $j\ge 2k-1$.
\end{description}
\end{claim}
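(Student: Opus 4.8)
The plan is to prove Claim~\ref{cla: randomized invariant} by induction on $k$, mirroring the structure of the deterministic argument in Section~\ref{sec: deterministic algorithms relaxation queries} but replacing the adversary's active choices by a fixed uniform distribution and carrying the invariants symbolically over the unrevealed suffix of $\pi$. The base case $k=1$ is immediate: before any step runs, the computation of $\algA$ has not examined any edge weights, so it is trivially independent of $x_1,\dots,x_{n-1}$, and the D-values are the initial ones $D[s]=0$, $D[v]=\weight_{sv}$ for $v\neq s$, which by the definition of $\weight_\pi$ equal either $n$ or $L$, so $D[x_j]\in[L-0,L]$ for $j\ge 1$ (after the same cosmetic adjustment used in the deterministic proof, initializing to $L+1$ or marking the $(x_0,\cdot)$ edges immediately). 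For the inductive step, I would fix $k$, assume (R1) and (R2) hold at the start of phase $k$, and then trace $\algA$ step by step through phase $k$ — the interval $(t_{k-1}(\pi),t_k(\pi)]$ — maintaining a phase-internal invariant analogous to Invariant~(J), namely that until the edge $(x_{2k-1},x_{2k})$ is accessed in order, every edge query whose endpoints lie in $Y_{k-1}=V\setminus\{x_0,\dots,x_{2k-2}\}$ (or from $x_{2k-2}$ into $Y_{k-1}$ along an unrelaxed edge) receives a negative answer, and crucially that answer does not depend on which specific vertices of $Y_{k-1}$ play the roles $x_{2k-1},x_{2k}$.

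The key quantitative point, which I would isolate as a sub-lemma, is the bounded-spread property: at every step during phase $k$, all D-values of vertices in $Y_{k-1}$ lie in an interval of width at most $n/2$, in fact in $[L-k+1,L]$ up to the $O(1)$ bookkeeping, while $D[x_j]=jn$ for $j\le 2k-2$. This follows because the only way a vertex $w\in Y_{k-1}$ can receive a D-value below $L-k+1$ is via a chain of relaxations that ultimately traces back to an edge $(x_{2k-2},w')$, and by the structure of $\weight_\pi$ — edges from even-indexed $x_i$ into $Y$ have weight $L-(n+\tfrac12)i$, edges from odd-indexed $x_i$ have weight exactly $L$, and the "path" edges have weight $n$ — any such relaxation lands $w$ at a value that is still at least $L-k+1$, unless $w$ is precisely the vertex that will become $x_{2k-1}$ reached along the weight-$n$ path edge, in which case it drops to $(2k-2)n+n=(2k-1)n$, and then its successor $x_{2k}$ could drop to $2kn$; but reaching that state is exactly the event $t_k(\pi)$, which ends the phase. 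Since every edge weight is at least $n > n/2$, and any two D-values within $Y_{k-1}$ differ by less than $n$, the predicate $D[u]+\weight_{uv}<D[v]$ for $u,v\in Y_{k-1}$ is false, and — this is the part that needs care — it is false \emph{for the same reason regardless of which suffix permutation was chosen}, because the D-values of all vertices in $Y_{k-1}$ are governed only by the prefix $x_0,\dots,x_{2k-2}$, the (symmetric, suffix-independent) weight pattern, and the relaxations $\algA$ has performed, none of whose outcomes have yet depended on the suffix. Concretely I would phrase this as: the D-value of each $w\in Y_{k-1}$, and the outcome of each query so far, is a function of $\algA$'s code, the prefix, and which edges within $Y_{k-1}$ have been touched — but the latter is determined by the former since query outcomes are suffix-independent by the induction hypothesis and the bounded-spread property; hence (R1) is preserved up to the moment the $(x_{2k-1},x_{2k})$ edge is accessed in order.

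To close the induction I then argue, exactly as in Section~\ref{sec: deterministic algorithms relaxation queries}, that the D-values cannot yet be correct before $t_k(\pi)$ (some vertex of $Y_{k-1}$ still has D-value $\ge L-k+1 > $ its true distance $< 2n$, valid since $L=5n^2$), so the phase genuinely runs; that at the end of phase $k$ the invariant (R2) holds with $k$ replaced by $k+1$, reading off the values $(2k-1)n$ and $2kn$ for $x_{2k-1},x_{2k}$ and $[L-k,L]$ for the rest; and that (R1) for $k+1$ holds because the first step at which the computation can depend on the identity of $x_{2k+1},\dots$ is precisely $t_k(\pi)+1$. The expected-length bound then follows from (R1): conditioned on the prefix $x_0,\dots,x_{2k-2}$, the pair $(x_{2k-1},x_{2k})$ is uniform over ordered pairs of distinct vertices of $Y_{k-1}$, a set of size $(n-2k+2)(n-2k+1)$, and by (R1) the step $t_{k-1}(\pi)$ is already fixed while the order in which $\algA$ accesses edges out of $x_{2k-2}$ and within $Y_{k-1}$ is also fixed; so $t_k(\pi)$ is the first time the \emph{specific} edge $(x_{2k-1},x_{2k})$ appears after $(x_{2k-2},x_{2k-1})$ in that fixed access order, and a standard averaging argument (the same one underlying Eppstein's randomized bound) shows $\Exp[t_k-t_{k-1}]=\Omega((n-2k)^2)$. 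Summing over $k=1,\dots,(n-1)/2$ gives $\Omega(n^3)$, and Lemma~\ref{lem: yaos principle} converts this into the randomized lower bound. The main obstacle I anticipate is making the suffix-independence bookkeeping in (R1) airtight: one must be scrupulous that \emph{every} branch $\algA$ can take inside a phase — D-value-affecting relaxations on edges in the three categories (s1), (s2), (s3), and every edge query — produces an outcome and a resulting D-value configuration that is a function of the prefix alone, so that the conditional distribution of the suffix given $\algA$'s transcript through $t_{k-1}(\pi)$ remains uniform; the bounded-spread sub-lemma is what makes this work, but verifying it requires the same exhaustive case analysis over (s1)/(s2)/(s3) as in the deterministic proof, now with the modified weights $n$ and $L-(n+\tfrac12)i$.
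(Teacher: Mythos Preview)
Your approach is the same as the paper's sketch: induction on $k$ driven by the bounded-spread observation that D-values within $Y_{k-1}$ lie in an interval of width at most $n/2$ while every edge weight is at least $n$, so edge queries inside $Y_{k-1}$ return ``no'' regardless of the suffix. The paper explicitly omits the formal proof, so your outline is already more detailed than what appears there; the one point to tighten is that the D-values themselves are not literally suffix-independent during phase $k$ (once $(x_{2k-2},x_{2k-1})$ is relaxed, $D[x_{2k-1}]$ drops to $(2k-1)n$, which depends on which vertex plays that role), so your phase-internal invariant should be stated in terms of query \emph{outcomes} rather than D-values --- that is what actually drives (R1).
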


Next, define $\tildet_k$ to be a random variable whose values are $t_k(\pi)$ for permutations $\pi$ distributed
randomly according to $\sigma$.
We refer to the time interval $(\tildet_{k-1},\tildet_k]$ as \emph{phase $k$}, and
let $\phaselength_k = \tildet_k - {\tildet_{k-1}}$ be the random variable equal to the length of this phase.

We then prove the following claim:

\begin{claim}\label{cla: random phase length}
$\Exp_{\weight\sim\sigma}[\phaselength_k] \ge  \half(n-2k+1)(n-2k+2)$.
\end{claim}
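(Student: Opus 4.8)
The plan is to fix $k$ and show that, conditioned on the $(2k-2)$-prefix $x_0,\dots,x_{2k-2}$ (equivalently, conditioned on the adversary's $(k-1)$st cruel prefix), the algorithm $\algA$ must perform at least $\half(n-2k+1)(n-2k+2)$ edge accesses during phase $k$ in expectation, and then average over the conditioning. By Claim~\ref{cla: randomized invariant}(R1) the computation up to the start of phase $k$ is independent of the suffix $x_{2k-1},\dots,x_{n-1}$, so conditioned on the prefix, the ``true'' last two vertices $x_{2k-1},x_{2k}$ form a uniformly random ordered pair of distinct vertices from $Y_{k-1}=V\setminus\{x_0,\dots,x_{2k-2}\}$, and this pair is independent of $\algA$'s behavior in phase $k$ until the moment the defining edge $(x_{2k-1},x_{2k})$ is accessed. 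Write $q = |Y_{k-1}| = n-2k+1$; the set $A$ of edges from $x_{2k-2}$ into $Y_{k-1}$ has size $q$ and the set $B$ of edges inside $Y_{k-1}$ has size $q(q-1)$, so $|A\cup B| = q^2 = (n-2k+1)^2$.

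Next I would identify, for a fixed run in phase $k$, the stopping time. Using invariant~(R2) together with the weight choices (all edge weights are $\ge n$, and any two D-values differ by at most $n/2$), every edge query on an edge of $B$ issued before edge $(x_{2k-1},x_{2k})$ is accessed has a forced negative outcome, independent of which pair is the true $(x_{2k-1},x_{2k})$; similarly every edge query on an edge $(x_{2k-2},y)\in A$ with $y\ne x_{2k-1}$ not yet relaxed has a forced positive outcome, again independent of the true pair. Consequently the ordered sequence of edges in $A\cup B$ that $\algA$ accesses in phase $k$ is, up until the step at which $(x_{2k-1},x_{2k})$ is first accessed, a fixed deterministic list $e_1,e_2,\dots$ (depending only on the prefix). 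Phase $k$ cannot end before $(x_{2k-1},x_{2k})$ has been accessed — this is exactly the event defining $t_k(\pi)$ — so the phase length is at least the position in this list at which the edge $(x_{2k-1},x_{2k})$ appears. Since $(x_{2k-1},x_{2k})$ is a uniformly random element of $B$ (equivalently of the $q(q-1)$ candidate edges), and the list is fixed, the expected position of the target edge among the $|A\cup B|=q^2$ edges is minimized when the target is placed as early as possible; a short averaging argument gives that the expected number of accessed edges of $A\cup B$ before (and including) the target is at least $\tfrac12(q^2+1) \ge \tfrac12 q(q+1) = \tfrac12(n-2k+1)(n-2k+2)$. (More carefully: among any ordering of a ground set of size $N$ containing a uniformly random distinguished element drawn from a sub-family of size $M\le N$, the expected rank of the distinguished element is at least $\tfrac{N+1}{2} - \tfrac{N-M}{2}\cdot\text{(correction)}$; since here $N=q^2$ and $M=q(q-1)$, the clean bound $\tfrac12 q(q+1)$ follows because even if all $q$ edges of $A$ are accessed first, the target's rank among the $q(q-1)$ edges of $B$ averages $\tfrac{q(q-1)+1}{2}$, so its overall rank averages at least $q + \tfrac{q(q-1)+1}{2} \ge \tfrac{q(q+1)}{2}$.) This lower bounds $\phaselength_k$ conditioned on the prefix, and taking expectation over the prefix — which only involves values of $k'<k$ and does not affect the conditional bound — yields the claim.

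I would then close by noting that phases are defined by the nested stopping times $\tildet_0 \le \tildet_1 \le \dots$ so that $\sum_k \phaselength_k \le \runtime(\algA,\weight)$, whence $\Exp_{\weight\sim\sigma}\runtime(\algA,\weight) \ge \sum_{k=1}^{(n-1)/2}\Exp[\phaselength_k] \ge \sum_{k=1}^{(n-1)/2}\half(n-2k+1)(n-2k+2) = \Omega(n^3)$, which combined with Lemma~\ref{lem: yaos principle} finishes Theorem~\ref{thm: relaxation-outcome lower bound}(b).

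\textbf{Main obstacle.} The delicate point is justifying that the list of $A\cup B$-edges accessed in phase $k$ is genuinely \emph{fixed} (independent of the random suffix) up to the step where $(x_{2k-1},x_{2k})$ is accessed. This requires checking that \emph{every} query outcome and \emph{every} relaxation effect that $\algA$ observes in that window is the same for all choices of the suffix — i.e. that edge queries inside $B$ stay negative, edge queries in $A$ behave identically, relaxations inside $B$ and from $A$ do not yet move any D-value below the band $[L-k+1,L]$, and that relaxations of edges with an endpoint in $X_{k-1}$ are also suffix-independent. This is exactly the content that must be extracted from Claim~\ref{cla: randomized invariant}, and verifying it rests on the arithmetic of the modified weight function (weights $\ge n$, spread of D-values $\le n/2$, $L$ large); it is the analogue of the case analysis~(s1)--(s3) and invariant~(J) in Section~\ref{sec: deterministic algorithms relaxation queries}, now carried out simultaneously over all suffixes rather than against an adaptive adversary. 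Once that coupling is in place, the probabilistic part (expected rank of a uniform element in a fixed list) is elementary.
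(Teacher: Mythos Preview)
Your plan mirrors the paper's argument: condition on the $(2k{-}1)$-prefix, use Claim~\ref{cla: randomized invariant} to argue that the algorithm's sequence of accesses in phase $k$ is suffix-independent until the target edge $(x_{2k-1},x_{2k})$ is reached, and then average over the uniform location of the target in $B$. The identification of the ``main obstacle'' (verifying that every query outcome and relaxation effect in the window is suffix-independent) is apt and is precisely what the paper defers to the full version.

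The gap is in your final averaging. Your chain $\tfrac12(q^2+1) \ge \tfrac12 q(q+1)$ is false for $q\ge 2$ (it asserts $1\ge q$), and its premise is also wrong: the target is uniform over the $q(q-1)$ edges of $B$, not over all $q^2$ edges of $A\cup B$, so if the fixed list places all $B$-edges first the expected rank is only $\tfrac12(q(q-1)+1)$. Your parenthetical repair (``even if all $q$ edges of $A$ are accessed first\dots'') computes the expected rank under one specific ordering --- in fact the ordering that \emph{maximizes} the rank, since the target lies in $B$ --- so it is the wrong direction for a lower bound. Even reinstating the constraint that $(x_{2k-2},x_{2k-1})$ must precede $(x_{2k-1},x_{2k})$ does not rescue the stated constant: the ``staircase'' schedule (relax $(x_{2k-2},u)$, then access all $(u,\cdot)$, then move to the next $u$) yields expected phase length $\tfrac12(q^2+2)<\tfrac12 q(q+1)$ for $q>2$. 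What your argument legitimately delivers is $\Exp[\phaselength_k]\ge \tfrac12(n-2k+1)(n-2k)$, obtained by noting that the $q(q-1)$ stopping times are distinct positive integers; summing this over $k$ still gives $\Omega(n^3)$, so Theorem~\ref{thm: relaxation-outcome lower bound}(b) is unaffected --- only the exact constant in the claim is at stake.
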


Let $\barz = z_0,z_1,...z_{2k-2}$ be some fixed $(2k-1)$-permutation of $V$ with $z_0 = s$.
Let $H$ be the event that $\pi$ starts with $\barz$. It is sufficient to prove the inequality in Claim~\ref{cla: random phase length}
for the conditional expectation $\Exp_{\weight\sim\sigma}[\phaselength_k|H]$.

So assume that event $H$ is true. Let $Y = V \setminus \braced{z_0,...,z_{2k-2}}$.
Now the argument is this: The suffix $x_{2k-1},...,x_{n-1}$ of $\pi$ is a random permutation of $Y$
and the edge $(x_{2k-1},x_{2k})$ is uniformly distributed among the edges in $Y$. 
Algorithm~$\algA$ is deterministic and all edge queries for edges inside $Y$, except for edge
$(x_{2k-1},x_{2k})$ (and only if $(x_{2k-2},x_{2k-1})$ has already been relaxed), 
will have negative answers. Similarly, all queries to edges from $x_{2k-2}$ to $Y$ will have positive answers.
So $\algA$ will be accessing these edges in some order that is
uniquely determined by the state of $\algA$ when phase $k$ starts.
Since there are $(n-2k+1)(n-2k+2)$ edges in $Y$, this implies that on average it will take
$\half(n-2k+1)(n-2k+2)$ steps for $\algA$ to access $(x_{2k-1},x_{2k})$,
even if we don't take into account that $(x_{2k-2},x_{2k-1})$ needs to be accessed first.
This will imply Claim~\ref{cla: random phase length}.

\smallskip

We now continue the proof of Theorem~\ref{thm: relaxation-outcome lower bound}(b).
For the algorithm to be correct, if the chosen permutation $\pi$ is $x_0,x_1,...,x_{n-1}$,
then the algorithm needs to relax the edges on this path in order as they appear on
the path. So its running time is at least $t_{(n-1)/2}(\pi)$.
Since $\tildet_{(n-1)/2} = \sum_{k=1}^{(n-1)/2} \phaselength_k$, using Claim~\ref{cla: random phase length}
and applying the linearity of expectation we obtain that
$\Exp_{\weight\sim\sigma} \runtime(\algA,\weight) 
	\ge \Exp_{\weight\sim\sigma}[\tildet_{(n-1)/2}]
	= \sum_{k=1}^{(n-1)/2} \Exp_{\weight\sim\sigma}[\phaselength_k]
	= \Omega(n^3)$,
completing the proof.


\emparagraph{Proof of Theorem~\ref{thm: all queries lower bound}(b).}
There is not much to prove here, because the reduction described in
Section~\ref{sec: deterministic algorithms three types of queries} applies
with virtually no changes to randomized algorithms.
Indeed, just like in the proof of Theorem~\ref{thm: all queries lower bound}(a)
(or more specifically the proof of Theorem~\ref{thm: reduction theorem}),
suppose that $\algR$ is a randomized algorithm that uses all three types of queries:
$D$-queries, weight-queries, the queries from model $\modelQ$, as well as relaxation updates,
and let $T(n)$ be $\algR$'s expected running time. We can convert $\algR$
into a randomized algorithm $\algR'$ with running time $O(T(n))$
that uses only the queries from $\modelQ$ and relaxation updates.
With this, Theorem~\ref{thm: all queries lower bound}(b)
follows from Theorem~\ref{thm: relaxation-outcome lower bound}(b).


\emparagraph{High-probability bounds.}
Using standard reasoning (see~\cite{2023_Eppstein_non-adaptive_shortest_path}, for example),
our lower bound results for expectation imply respective high-probability bounds,
namely that there are no randomized algorithms in the models from
Theorems~\ref{thm: relaxation-outcome lower bound} and~\ref{thm: all queries lower bound}
that compute correct distance values in time $o(n^3)$ with probability at least $1-o(1)$.

To justify this, suppose that $\algR$ is a randomized algorithm that computes correct distance values
in time $T(n) = o(n^3)$ with probability $1-o(1)$.
Consider the algorithm $\algR'$ obtained from $\algR$ by
switching to the Bellman-Ford relaxation sequence right after step $T(n)$.
The expected running time of $\algR'$ is then at most
$T(n) + o(1) n^3 = o(n^3)$, but this would contradict our
lower bounds in Theorems~\ref{thm: relaxation-outcome lower bound}(b) and~\ref{thm: all queries lower bound}(b).


\section{Final Comments and Open Problems}
\label{sec: final comments}

Our reduction in Section~\ref{sec: deterministic algorithms three types of queries}
introduces negative weights, raising a natural question: \emph{Is it possible to use $o(n^3)$ relaxations
with only weight-queries for instances with non-negative weights?}
(This question is of purely theoretical interest, because $O(n^2)$ relaxations can be achieved, using Dijkstra's algorithm, if
D-queries are used instead.)
Our proof techniques do not work for this variant. 
The reason is, in the instances we construct the shortest-path tree is a Hamiltonian path, and for such instances
this path can be uniquely determined by the weight ordering: start from $s$, and at each step follow 
the shortest outgoing edge from the current vertex to a yet non-visited vertex. 
So only $n-1$ relaxations are needed. It is unclear what is the ``hard'' weight ordering in this case. It can be shown
that in the two extreme cases: 
(i) if the weight orderings of outgoing edges from each vertex are agreeable (that is, they are determined
by a permutation of the vertices),  or 
(ii) if they are random, then there is a relaxation sequence of length only $O(n^{2.5})$ (and this likely can be improved further).

A natural extension of our query/relaxation model would be to allow \emph{unconditional edge updates}
of the form $D[v]\assign D[u] + \weight_{uv}$.
A combination of such edge updates and D-queries allows an algorithm to check for properties
that are impossible to test if only relaxation updates are used. For example,
by applying edge updates repeatedly around cycles, such an algorithm would be able to
determine, for any given rational number $c$, whether one cycle is at least $c$ times longer than some other cycle. 

A more open-ended question  is to determine if there are simple types of queries, say some
linear inequalities involving weights and the D-values (with a constant number of variables),
that would be sufficient to yield an adaptive algorithm (possibly randomized) that makes $o(n^3)$ relaxations.

The case of random universal sequences is also not fully resolved. While it is known that the asymptotic
bound is $\Theta(n^3)$, there is a factor-of-$4$ gap for the leading constant, between $\onetwelth$ and $\onethird$
\cite{2023_Eppstein_non-adaptive_shortest_path,2012_bannister_eppstein_randomized_speedup}.

We remark that our proofs are somewhat sensitive to the initialization of the D-values. Recall that
in our model we assume that initially $D[v] = \weight_{sv}$ for $v\neq s$. This is natural,
and it guarantees that at all times the D-values represent lengths of paths from $s$. 
It also has the property of being language- and platform-independent. However, some descriptions of
shortest-path algorithms initialize the D-values to infinity, or some very large number.
The proof of Theorem~\ref{thm: relaxation-outcome lower bound} in Section~\ref{sec: deterministic algorithms relaxation queries}
can be modified to work if
the D-values were initialized to some sufficiently large value $M$ (the adversary can then use
$L = M-1$ in her strategy). However, then the edge lengths are no longer polynomial, and 
the proof of Theorem~\ref{thm: all queries lower bound} in Section~\ref{sec: deterministic algorithms three types of queries}
does not apply in its current form.
Initializing to infinity would also affect the proofs.
The reduction in Section~\ref{sec: deterministic algorithms three types of queries} can be modified to account for
infinite D-values,  but we don't know how to adapt the proof in Section~\ref{sec: deterministic algorithms relaxation queries}
to this model. We leave open the problem of finding a more ``robust'' lower bound proof, that works for an arbitrary
valid initialization and uses only polynomial weights.


\bibliographystyle{plain}
\bibliography{shortest_paths.bib}

\begin{thebibliography}{10}

\bibitem{2012_bannister_eppstein_randomized_speedup}
Michael~J. Bannister and David Eppstein.
\newblock Randomized speedup of the {Bellman-Ford} algorithm.
\newblock In {\em Proceedings of the 9th Meeting on Analytic Algorithmics and
  Combinatorics, {ANALCO} 2012}, pages 41--47. {SIAM}, 2012.

\bibitem{1958_bellman_routing_problem}
Richard Bellman.
\newblock On a routing problem.
\newblock {\em Quart. Appl. Math.}, 16:87--90, 1958.

\bibitem{2022_bernstein_etal_negative_weights}
Aaron Bernstein, Danupon Nanongkai, and Christian Wulff{-}Nilsen.
\newblock Negative-weight single-source shortest paths in near-linear time.
\newblock In {\em Proceedings of the 63rd {IEEE} Annual Symposium on
  Foundations of Computer Science, {FOCS} 2022}, pages 600--611, 2022.

\bibitem{2005_bollobas_differences}
Béla Bollobás and Oleg Pikhurko.
\newblock Integer sets with prescribed pairwise differences being distinct.
\newblock {\em European Journal of Combinatorics}, 26(5):607--616, 2005.

\bibitem{2004_cheng_ansari_all_hops}
Gang Cheng and Nirwan Ansari.
\newblock Finding all hops shortest paths.
\newblock {\em {IEEE} Commun. Lett.}, 8(2):122--124, 2004.

\bibitem{2017_cohen_etal_negative-weight_shortest_paths}
Michael~B. Cohen, Aleksander Madry, Piotr Sankowski, and Adrian Vladu.
\newblock Negative-weight shortest paths and unit capacity minimum cost flow in
  {$\tilde{O}( m^{10/7} \log W )$} time (extended abstract).
\newblock In {\em Proceedings of the Twenty-Eighth Annual {ACM-SIAM} Symposium
  on Discrete Algorithms, {SODA} 2017}, pages 752--771, 2017.

\bibitem{1984_deo_pang_shortest_path}
Narsingh Deo and Chi{-}Yin Pang.
\newblock Shortest-path algorithms: Taxonomy and annotation.
\newblock {\em Networks}, 14(2):275--323, 1984.

\bibitem{1959_dijkstra_algorithm}
Edsger~W Dijkstra.
\newblock A note on two problems in connexion with graphs.
\newblock {\em Numerische Mathematik}, 1(1):269--271, 1959.

\bibitem{2023_Eppstein_non-adaptive_shortest_path}
David Eppstein.
\newblock Lower bounds for non-adaptive shortest path relaxation.
\newblock In {\em Proceedings of the 18th International Symposium on Algorithms
  and Data Structures, {WADS} 2023}, pages 416--429, 2023.

\bibitem{1941_erdos_sidon}
P.~Erdös and P.~Turán.
\newblock On a problem of {Sidon} in additive number theory, and on some
  related problems.
\newblock {\em Journal of the London Mathematical Society}, s1-16(4):212--215,
  1941.

\bibitem{2024_Fineman_single-source_shortest_paths}
Jeremy~T. Fineman.
\newblock Single-source shortest paths with negative real weights in
  $\tilde{O}(mn^{8/9})$ time.
\newblock In {\em Proceedings of the 56th Annual {ACM} Symposium on Theory of
  Computing, {STOC 2024}}, pages 3--14, 2024.

\bibitem{1956_ford_network_flow_theory}
L.~R. Ford.
\newblock {\em Network Flow Theory}.
\newblock RAND Corporation, Santa Monica, CA, 1956.

\bibitem{1995_goldberg_scaling_algorithms}
Andrew~V. Goldberg.
\newblock Scaling algorithms for the shortest paths problem.
\newblock {\em SIAM Journal on Computing}, 24(3):494--504, 1995.

\bibitem{2002_guerin_orda_computing_shortest}
Roch Gu{\'{e}}rin and Ariel Orda.
\newblock Computing shortest paths for any number of hops.
\newblock {\em {IEEE/ACM} Trans. Netw.}, 10(5):613--620, 2002.

\bibitem{2024_Hu_Kozma_Yens_improvement_is_optimal}
Jialu Hu and L{\'{a}}szl{\'{o}} Kozma.
\newblock Non-adaptive {Bellman-Ford}: Yen's improvement is optimal.
\newblock {\em CoRR}, abs/2402.10343, 2024.

\bibitem{2016_jukna_schnitger_on_the_optimality}
Stasys Jukna and Georg Schnitger.
\newblock On the optimality of {Bellman-Ford-Moore} shortest path algorithm.
\newblock {\em Theor. Comput. Sci.}, 628:101--109, 2016.

\bibitem{2023_kociumaka_bellman-ford_optimal}
Tomasz Kociumaka and Adam Polak.
\newblock {Bellman-Ford} is optimal for shortest hop-bounded paths.
\newblock In {\em Proceedings of the 31st Annual European Symposium on
  Algorithms, {ESA} 2023}, pages 72:1--72:10, 2023.

\bibitem{2011_meyer_etal_new_bounds_for_old_algorithms}
Ulrich Meyer, Andrei Negoescu, and Volker Weichert.
\newblock New bounds for old algorithms: On the average-case behavior of
  classic single-source shortest-paths approaches.
\newblock In {\em Proceedings of the First International {ICST} Conference on
  Theory and Practice of Algorithms in (Computer) Systems, {TAPAS} 2011}, pages
  217--228, 2011.

\bibitem{1959_moore_shortest_path}
E.~F. Moore.
\newblock The shortest path through a maze.
\newblock In {\em Proceedings of an International Symposium on the Theory of
  Switching, Part II}, pages 285–--292, 1959.

\bibitem{1955_shimbel_structure_in_communication_nets}
A.~Shimbel.
\newblock Structure in communication nets.
\newblock In {\em Proceedings of the Symposium on Information Networks}, page
  199–203. Polytechnic Press of the Polytechnic Institute of Brooklyn, 1955.

\bibitem{1938_singer_projective_geometry}
James Singer.
\newblock A theorem in finite projective geometry and some applications to
  number theory.
\newblock {\em Trans. Amer. Math. Soc.}, 43(3):377--385, 1938.

\bibitem{1977_yao_probabilistic_computations}
Andrew~Chi{-}Chih Yao.
\newblock Probabilistic computations: Toward a unified measure of complexity
  (extended abstract).
\newblock In {\em Proceedings of the 18th Annual Symposium on Foundations of
  Computer Science}, pages 222--227, 1977.

\bibitem{1975_yen_shortest_path_network_problems}
Y.~Yen.
\newblock {\em Shortest Path Network Problems}, volume~18 of {\em Mathematical
  Systems in Economics}.
\newblock Verlag Anton Hain, Meisenheim am Glan, 1975.

\end{thebibliography}

\end{document}